\newcommand{\E}{\mathbb{E}}
\newcommand{\R}{\mathbb{R}}
\newcommand{\T}{\mathbb{T}}
\newcommand{\Prob}{\mathbb{P}} 
\renewcommand{\Prob}[1]{\mathbb{P}\left( #1 \right)} 
\newcommand{\ProbState}[2]{\mathbb{P}_{#1}\left( #2 \right)}
\DeclareMathOperator*{\argmax}{arg\,max} 
\DeclareMathOperator*{\argmin}{arg\,min} 
\newcommand{\ProbHat}[1]{\widehat{\mathbb{P}}\left( #1 \right)} 
\newcommand{\Indicator}[1]{\mathbf{1}\left( #1 \right)}
\def\independenT#1#2{\mathrel{\setbox0\hbox{$#1#2$}%
\copy0\kern-\wd0\mkern4mu\box0}}
\newcommand{\method}{LICORS }
\newcommand{\vnorm}[1]{\lVert #1 \rVert}
\newcommand{\card}[1]{| #1 |}
\newcommand{\KL}[2]{\mathcal{D}_{KL}\left(#1 \mid \mid #2 \right)} 
\newcommand{\SymKLexp}[2]{\frac{\KL{p}{q} + \KL{q}{p}}{2}}
\newcommand{\vecS}[1]{\mathbf{#1}}
\newcommand{\field}[3]{#1 \hspace{-0.075cm} \left(\vecS{#2}, #3 \right) }
\newcounter{bar}
\newcommand{\foo}{%
\stepcounter{bar}%
\thebar}
\newtheorem{theorem}{Theorem}[section]
\newtheorem{assumption}[theorem]{Assumption}
\newtheorem{definition}[theorem]{Definition}
\newtheorem{proposition}[theorem]{Proposition}
\newcommand{\Title}{Mixed LICORS: A Nonparametric Algorithm for Predictive State Reconstruction\footnote{Appears in the Proceedings of AISTATS 2013.}}
\newcommand{\Author}{Georg M.\ Goerg and Cosma Rohilla Shalizi}
\newcommand{\Affil}{ Department of Statistics, Carnegie Mellon University, Pittsburgh, PA 15213\\ \texttt{ $\lbrace$ gmg, cshalizi $\rbrace$ @stat.cmu.edu}}
\title{\Title}
\author{\Author \\ \Affil}
\date{\today}
\begin{document}

\maketitle

\begin{abstract}
  We introduce \emph{mixed LICORS}, an algorithm for learning nonlinear,
  high-dimensional dynamics from spatio-temporal data, suitable for both
  prediction and simulation.  Mixed LICORS extends the recent LICORS algorithm
  \citep{LICORS} from hard clustering of predictive distributions to a
  non-parametric, EM-like soft clustering.  This retains the asymptotic
  predictive optimality of LICORS, but, as we show in simulations, greatly
  improves out-of-sample forecasts with limited data.  The new method is
  implemented in the publicly-available R package
  \href{http://cran.r-project.org/web/packages/LICORS/}{LICORS}.
\end{abstract}

\section{Introduction}
\label{sec:Introduction}

Recently \citet{LICORS} introduced light cone reconstruction of states
(LICORS), a nonparametric procedure for recovering predictive states from
spatio-temporal data.  Every spatio-temporal process has an associated, latent
prediction process, whose measure-valued states are the optimal local
predictions of the future at each point in space and time \citep{Shalizi03}.
LICORS consistently estimates this prediction process from a single realization
of the manifest space-time process, through agglomerative clustering in the
space of predictive distributions; estimated states are clusters.  This
converges on the minimal set of states capable of optimal prediction of the
original process.

Experience with other clustering problems shows that soft-threshold techniques
often predict much better than hard-threshold methods.  Famously, while
$k$-means \citep{Lloyd82_KmeansLloydAlgorithm} is very fast and robust, the
expectation maximization (EM) algorithm
\citep{Dempster-Laird-Rubin-introduce-EM} in Gaussian mixture models gives
better clustering results.  Moreover, mixture models allow clusters to be
interpreted probabilistically, and new data to be assigned to old clusters.

With this inspiration, we introduce {\em mixed LICORS}, a soft-thresholding
version of (hard) LICORS.  This embeds the prediction process framework of
\citet{Shalizi03} in a mixture-model setting, where the predictive states
correspond to the optimal mixing weights on extremal distributions, which are
themselves optimized for forecasting.  Our proposed nonparametric, EM-like
algorithm then follows naturally.

After introducing the prediction problem and fixing notation, we explain the
mixture-model and hidden-state-space interpretations of predictive states (\S
\ref{sec:light_cones_predictive_states}).  We then present our nonparametric EM
algorithm for estimation, with automatic selection of the number of predictive
states (\S \ref{sec:EM}), and the corresponding prediction algorithm (\S
\ref{sec:prediction_mixedLICORS}).  After demonstrating that mixed LICORS
predicts better out of sample than hard-clustering procedures (\S
\ref{sec:simulations}), we review the proposed method and discuss future work
(\S \ref{sec:discussion}).

\section{A Predictive States Model for Spatio-temporal Processes}
\label{sec:light_cones_predictive_states}

We fix notation and the general set-up for predicting spatio-temporal
processes, following \citet{Shalizi03}.  We observe a random field
$\field{X}{r}{t}$, discrete- or continuous-valued, at each point $\vecS{r}$ on
a regular spatial lattice $\vecS{S}$, at each moment $t$ of discrete time $\T =
1:T$, or $N=T\card{\vecS{S}}$ observational in all.  The field is $(d+1)D$ if
space $\vecS{S}$ is $d$ dimensional (plus $1$ for time); video is $(2+1)D$.
$\vnorm{ \vecS{r} -\vecS{u} }$ is a norm (e.g., Euclidean) on the spatial
coordinates $\vecS{r}, \vecS{u} \in \vecS{S}$.

To optimally predict an unknown (future) $\field{X}{r}{t}$ given (past)
observed data $\mathcal{D} = \lbrace \field{X}{s}{u} \rbrace_{\vecS{s} \in
  \vecS{S}, u \in \T}$, we need to know $\Prob{ \field{X}{r}{t} \mid
  \mathcal{D} }$.  Estimating this conditional distribution is quite difficult
if $\field{X}{r}{t}$ can depend on all variables in $\mathcal{D}$.  Since
information in a physical system can typically only propagate at a finite speed
$c$, $\field{X}{r}{t}$ can only depend on a subset of $\mathcal{D}$. We
therefore only have to consider local, spatio-temporal neighborhoods of
$\field{}{r}{t}$ as possibly relevant for prediction.

\subsection{Light Cones}

The past light cone (PLC) of $\field{}{r}{t}$ is all past events that could
have influenced $\field{}{r}{t}$,
\begin{equation}
\label{eq:PLC}
\field{L^{-}}{r}{t} = \left\{\field{X}{u}{s} \mid s \leq t, \vnorm{ 	\vecS{r}-\vecS{u} } \leq c (t-s)\right\}.
\end{equation}
Analogously the future light cone (FLC) $\field{L^{+}}{r}{t}$ is all future
events which could be influenced by $\field{}{r}{t}$.  In practice, we limit
PLCs and FLCs to finite horizons $h_p$ and $h_f$; together with $c$, these fix
the dimensionality of light-cones; see Figure \ref{fig:light_cones_2D_and_3D}
for an illustration.

\begin{figure}[!t]
  \centering
  \begin{subfigure}[t]{0.17\textwidth}
    \centering
    \includegraphics[width=\textwidth, trim = 0.8cm 0.7cm 0.8cm 0.1cm, clip = true]{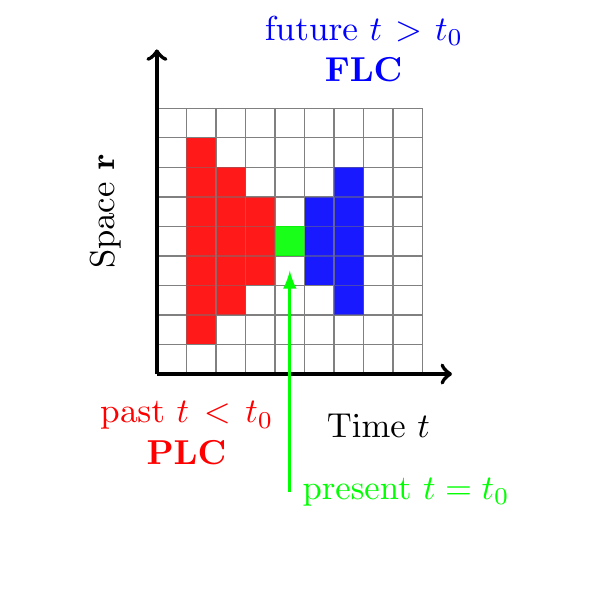}\label{fig:light_cones_1plus1D}
    \caption{$(1+1)D$ field}
  \end{subfigure}%
  ~ 
  \begin{subfigure}[t]{0.3\textwidth}
    \centering
    \includegraphics[width=\textwidth, trim = 0.6cm 0.8cm 0.2cm 0.1cm, clip = true]{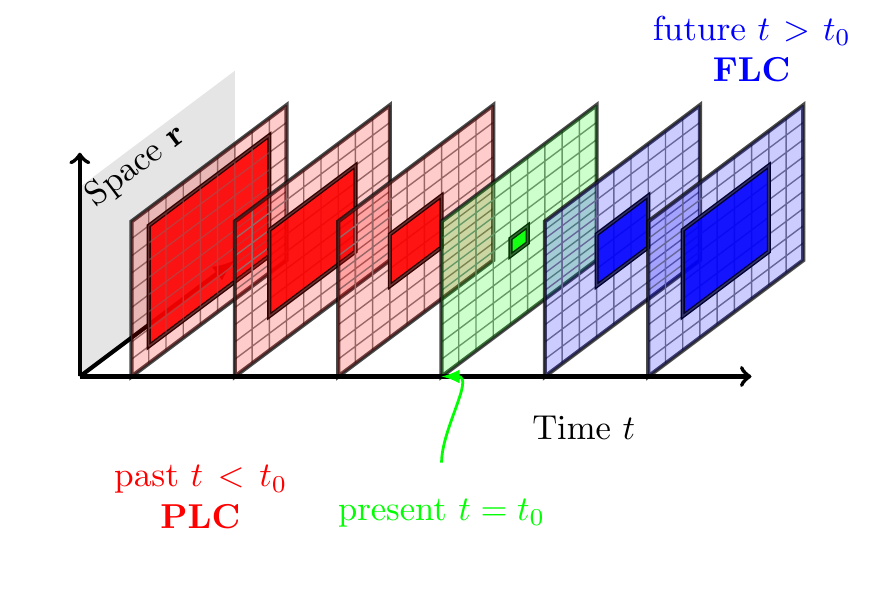}\label{fig:light_cones_2plus1D}
    \caption{$(2+1)D$ field}
  \end{subfigure}
  \caption{\label{fig:light_cones_2D_and_3D} Geometry of past (
    {\color{red}{$\field{\ell^{-}}{r}{t}$}}) and future
    ({\color{blue}{$\field{\ell^{+}}{r}{t}$}}) light cones, with $h_p = 3$,
    $h_f = 2$, $c=1$.}
\end{figure}

For physically-plausible processes, then, we only have to find the conditional
distribution of $\field{X}{r}{t}$ given its PLC $\left( \ell^{-},
  \field{}{r}{t} \right)$. Doing this for every $\field{}{r}{t}$ in space-time,
however, is still excessive.  Not only do spatio-temporal patterns recur, but
different histories can have similar predictive consequences.  Thus we need not
find $\Prob{ \field{X}{r}{t} \mid \left( \ell^{-}, \field{}{r}{t} \right)}$ for
each $\field{}{r}{t}$ separately, but can first summarize PLCs by predictively
sufficient statistics, $\epsilon\left( \ell^{-}, \field{}{r}{t} \right)$, and
then only find $\Prob{ \field{X}{r}{t} \mid \epsilon\left( \ell^{-},
    \field{}{r}{t} \right)}$.

We now construct the minimal sufficient statistic,
$\epsilon(\field{\ell^{-}}{r}{t})$, following \citet{Shalizi03}, to which we
refer for some mathematical details, and references on predictive sufficiency.

\begin{definition}[Equivalent configurations]
  \label{def:equivalent_configurations}
  The past configurations $\ell_i^{-}$ at $\field{}{r}{t}$ and $\ell_j^{-}$ at
  $\field{}{u}{s}$ are {\em predictively equivalent}, $(\ell_i^{-},
  \field{}{r}{t}) \sim (\ell_j^{-}, \field{}{u}{s})$, if they have the same
  conditional distributions for FLCs, i.e.,
  \small
  \begin{equation*}
    \Prob{\field{L^{+}}{r}{t} \mid \field{L^{-}}{r}{t} = \ell_i^{-} } = \Prob{ \field{L^{+}}{u}{s}  \mid \field{L^{-}}{u}{s} = \ell_j^{-} }.
  \end{equation*}
  \normalsize
\end{definition}

Let $\left[ (\ell^{-}, \field{}{r}{t}) \right]$ be the equivalence class of
$(\ell^{-}, \field{}{r}{t})$, the set of all past configurations and
coordinates that predict the same future as $\ell^{-}$ does at
$\field{}{r}{t}$.  Let
\begin{equation}
  \label{eq:epsilon}
  \epsilon(\ell^{-},\field{}{r}{t}) \equiv \left[ \ell^{-} \right] .
\end{equation} 
be the function mapping each $(\ell^{-}, \field{}{r}{t})$ to its predictive
equivalence class.  The values $\epsilon$ can take are {\em predictive states};
they are the minimal statistics which are sufficient for predicting $L^+$ from
$L^-$ \citep{Shalizi03}.

\begin{assumption}[Conditional invariance]
\label{ass:cond_invariance}
  The predictive distribution of a PLC configuration $\ell^{-}$ does not change
  over time or space.  That is, for all $\vecS{r}, t$, all $\vecS{u}, s$, and
  all past light-cone configurations $\ell^{-}$,
  \begin{equation}
    (\ell^{-},\field{}{r}{t}) \sim (\ell^{-},\field{}{u}{s})
  \end{equation}
  \label{ass:st-invariance}
  We may thus regard $\sim$ as an equivalence relation among PLC
  configurations, and $\epsilon$ as a function over $\ell^{-}$ alone.
\end{assumption}

Assumption \ref{ass:cond_invariance} enables inference from a single
realization of the process (as opposed to multiple independent
realizations). For readability, we encode the space-time index $\field{}{r}{t}$
by a single index $i = 1, \ldots, N$.\footnote{Appendix \ref{sec:proofs} in the
  SM gives an explicit rule (Eq.\ \eqref{eq:assign_index_increasingy}) to map
  each $\field{}{r}{t}$ to a unique $i$.}

The future is independent of the past given the predictive state
\citep{Shalizi03},
\begin{equation}
\label{eq:conditional_independence}
\Prob{ L^{+}_i \mid \ell^{-}_i, \epsilon(\ell^{-}_i) } = \Prob{L^{+}_i \mid \epsilon(\ell^{-}_i) }
\end{equation} 
and, by construction,
\begin{equation}
\label{eq:equivalent_predictions}
\Prob{ L^{+}_i \mid \ell^{-}_i } = \Prob{L^{+}_i \mid \epsilon(\ell^{-}_i) }.
\end{equation}

Thus the prediction problem simplifies to estimating $\Prob{ X_i \mid
  \epsilon(\ell^{-}_i) }$ for each $i$.

\subsection{A Statistical Predictive-States Model}

The focus on predictive distributions in constructing the predictive states
does not prevent us from using them to model the joint distribution.

\begin{proposition}
  \label{prop:joint_factorizes}
  The joint pdf of $X_1, \ldots, X_{\tilde{N}}$ satisfies
  \begin{equation}
    \label{eq:joint_factorizes}
    \Prob{X_1, \ldots, X_{\tilde{N} } } = \Prob{ \vecS{M} } \prod_{i=1}^{\tilde{N}} \Prob{X_i \mid \ell^{-}_i},
  \end{equation}
  where $\vecS{M}$ is the {\bf margin} of the spatio-temporal process, i.e.,
  all points in the field that do not have a completely observed PLC.
\end{proposition}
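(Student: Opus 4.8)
The plan is to unfold the joint density as a telescoping product over a carefully chosen enumeration of the field, and then use the finite propagation speed $c$ to collapse each factor onto the relevant past light cone. I read the left-hand side of \eqref{eq:joint_factorizes} as the joint density of the whole field, the margin $\vecS{M}$ together with the points $X_1,\dots,X_{\tilde N}$ (those with a fully observed PLC) exhausting it.

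First I would fix an enumeration with two properties: (i) the points of $\vecS{M}$ are listed first, in any internal order; and (ii) the remaining, non-margin points are listed in an order that extends the light-cone partial order, i.e.\ whenever a non-margin point $\field{X}{u}{s}$ lies in the PLC of a non-margin point $\field{X}{r}{t}$ it receives the smaller index. The explicit rule of Eq.~\eqref{eq:assign_index_increasingy} in the SM is one such enumeration; in general such an order exists because ``lies in the PLC of'' is acyclic: $\field{X}{u}{s} \in \field{L^{-}}{r}{t}$ forces $s \leq t$, and if in addition $\field{X}{r}{t} \in \field{L^{-}}{u}{s}$ then $s = t$ and, by \eqref{eq:PLC}, $\vecS{r} = \vecS{u}$, i.e.\ the two points coincide. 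Relabelling the non-margin points $X_1,\dots,X_{\tilde N}$ along this order, we get $\ell^{-}_i \subseteq \vecS{M} \cup \{X_1,\dots,X_{i-1}\}$ for every $i$.

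Then I would apply the chain rule along this order,
\[
  \Prob{\vecS{M}, X_1, \dots, X_{\tilde N}} \;=\; \Prob{\vecS{M}}\,\prod_{i=1}^{\tilde N} \Prob{X_i \mid \vecS{M}, X_1, \dots, X_{i-1}},
\]
reducing the claim to the single-site identity $\Prob{X_i \mid \vecS{M}, X_1, \dots, X_{i-1}} = \Prob{X_i \mid \ell^{-}_i}$ for every $i$. This is the step that uses finite propagation speed: because signals travel at speed at most $c$, conditionally on its PLC $\ell^{-}_i$ the value $X_i$ is independent of every field value lying outside the \emph{future} light cone of $\field{}{r}{t}$; by property (ii) none of $\vecS{M}, X_1,\dots,X_{i-1}$ lies inside that future cone, so conditioning on them adds nothing beyond $\ell^{-}_i$. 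Substituting back yields \eqref{eq:joint_factorizes}; combining it with \eqref{eq:equivalent_predictions} gives the equivalent form with $\epsilon(\ell^{-}_i)$ in place of $\ell^{-}_i$.

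The main obstacle is making the single-site identity rigorous rather than merely ``causal'': one must verify that $\big(\vecS{M}\cup\{X_1,\dots,X_{i-1}\}\big)\setminus\ell^{-}_i$ is conditionally independent of $X_i$ given $\ell^{-}_i$. Strictly-past off-cone points are immediate from finite $c$; the delicate case is points on the \emph{same} time slice as $\field{}{r}{t}$ but at other spatial locations, which a raster-style enumeration does list before $i$. These are spacelike-separated from $\field{}{r}{t}$, and by the same finite-speed modeling assumption---each site's value is generated from its own immediate past and cannot have influenced $X_i$---they too drop out of the conditioning. Turning this into a clean argument is essentially a Markov-random-field computation on the DAG induced by the light-cone ordering, and is the one place where the explicit index construction of the SM does real work.
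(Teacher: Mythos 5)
Your proposal follows essentially the same route as the paper's own proof: an enumeration consistent with the light-cone (time-increasing) order, the chain rule with the margin $\vecS{M}$ conditioned on first, and the collapse of each factor $\Prob{X_i \mid \vecS{M}, X_1,\dots,X_{i-1}}$ to $\Prob{X_i \mid \ell^{-}_i}$ via the finite-propagation-speed conditional independence, then induction. Your added care about spacelike-separated points on the same time slice is a point the paper simply absorbs into its blanket assertion that $X_i$ is conditionally independent of everything else given its own PLC, so it is a welcome clarification rather than a different argument.
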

\begin{proof}
  See Supplementary Material, \S \ref{proof:prop:joint_factorizes}.
\end{proof}

Proposition \ref{prop:joint_factorizes} shows that the focus on conditional
predictive distributions does not restrict the applicability of the light cone
setup to forecasts alone, but is in fact a generative representation for any
spatio-temporal process.

\subsubsection{Minimal Sufficient Statistical Model}

Given $\epsilon(\ell^{-}_i)$ Eq.\ \eqref{eq:joint_factorizes} simplifies to
(omitting $\Prob{\vecS{M}}$)
\begin{align}
  \Prob{X_1, \ldots, X_N \mid \vecS{M}, \epsilon} & = \prod_{i=1}^{N} \Prob{X_{i} \mid  \ell_i^{-}; \epsilon(\ell^{-}_i) } \nonumber\\
  \label{eq:joint_pdf_product_of_epsilons}
  &= \prod_{i=1}^{N} \Prob{X_{i} \mid \epsilon(\ell^{-}_i) },
\end{align}
using \eqref{eq:conditional_independence}.  Any particular $\epsilon$
implicitly specifies the number of predictive states $K$, and all $K$
predictive distributions $\Prob{X_i \mid \epsilon(\ell^{-}_i)}$. However, in
practice only $X_i$ and $\ell^{-}_i$ are observed; the mapping $\epsilon$ is
exactly what we are trying to estimate.

\subsection{Predictive States as Hidden Variables}

Since there is a one-to-one correspondence between the mapping $\epsilon$ and
the set of equivalence classes / predictive states $\epsilon(\ell^{-})$ which
are its range, \citet{Shalizi03} and \citet{LICORS} do not formally distinguish
between them.  Here, however, it is important to keep distinction between the
predictive state space $\mathcal{S}$ and the mapping $\epsilon: \ell^{-}_i
\rightarrow \mathcal{S}$.  Our EM algorithm is based on the idea that the
predictive state is a hidden variable, $S_i$, taking values in the finite state
space $\mathcal{S} = \lbrace s_1, \ldots, s_K \rbrace$, and the mixture weights
of PLC $\ell^{-}_i$ are the soft-threshold version of the mapping
$\epsilon(\ell^{-}_i)$.  It is this hidden variable interpretation of
predictive states that we use to estimate the minimal sufficient statistic
$\epsilon$, the hidden state space $\mathcal{S}$, and the predictive
distributions $\Prob{ X_i \mid S_i = s_j}$.

Introducing the abbreviation $\ProbState{j}{\cdot}$ for $\Prob{ \cdot \mid S_i
  = s_j}$, the latent variable approach lets
\eqref{eq:joint_pdf_product_of_epsilons} be written as
\begin{equation}
\label{eq:joint_pdf_product_of_pred_states} 
\prod_{i=1}^{N} \Prob{X_{i} \mid  S_i } = \prod_{i=1}^{N}{ \sum_{j=1}^{K}{\Indicator{S_i = s_j } \ProbState{j}{X_{i}} }}.
\end{equation}
Eq.\ \eqref{eq:joint_pdf_product_of_pred_states} is the pdf of a $K$ component
mixture model with \emph{complete data}, and $\Indicator{S_i = s_j }$ is a
randomized version of $\epsilon: \ell^{-}_i \mapsto S_i$.

\subsection{Log-likelihood of $\epsilon$}

From \eqref{eq:joint_pdf_product_of_pred_states} the complete data
log-likelihood is, neglecting a $\log{\Prob{\mathbf{M}}}$ term,
\small
\begin{align}
  \ell(\epsilon; \mathcal{D}, S_1^N) &= \sum_{i=1}^{N} \log \left( \sum_{j=1}^{K} \Indicator{S_i = s_j } \Prob{X_{i} \mid  \epsilon(\ell^{-}_i) = s_j } \right) \nonumber \\
  \label{eq:predictive_loglik_complete}
  & = \sum_{i=1}^{N} \sum_{j=1}^{K} \Indicator{S_i = s_j} \log \Prob{X_{i} \mid
    \epsilon(\ell^{-}_i) = s_j},
\end{align}
\normalsize
where $S_1^N := \lbrace S_1, \ldots, S_N \rbrace$ and the second equality
follows since $\Indicator{S_i = s_j} = 1$ for one and only one $j$, and $0$
otherwise.

The ``parameters'' in \eqref{eq:predictive_loglik_complete} are $\epsilon$ and
$K$; $X_i$ and $\ell^{-}_i$ are observed, and $S_i$ is a hidden variable.  The
optimal mapping $\epsilon: L^{-} \rightarrow \mathcal{S}$ is the one that
maximizes \eqref{eq:predictive_loglik_complete}:
\begin{equation}
\label{eq:optim_problem_epsilon}
\epsilon^{*} = \argmax_{\epsilon}{\ell(\epsilon; \mathcal{D}, S_i)}.
\end{equation}

Without any constraints on $K$ or $\epsilon$ the maximum is obtained for $K =
N$ and $\epsilon(\ell^{-}_i) = \ell^{-}_i$; ``the most faithful description of
the data is the data''.\footnote{On the other extreme is a field with only
  $K=1$ predictive state, i.e.\ the iid case.}  As this tells us nothing about
the underlying dynamics, we must put some constraints on $K$ and/or $\epsilon$
to get a useful solution.  For now, assume that $K \ll N$ is fixed and we only
have to estimate $\epsilon$; in Section \ref{sec:merging}, we will give a
data-driven procedure to choose $K$.

\subsection{Nonparametric Likelihood Approximation}

To solve \eqref{eq:optim_problem_epsilon} with $K \ll N$ we need to evaluate
\eqref{eq:predictive_loglik_complete} for candidate solutions $\epsilon$.  We
cannot do this directly, since \eqref{eq:predictive_loglik_complete} involves
the unobserved $S_i$.  Moreover, predictive distributions can have arbitrary
shapes, so we want to use nonparametric methods, but this inhibits direct
evaluation of the component distributions $\Prob{X_i \mid \epsilon(\ell^{-}_i)
  = s_j }$.

We solve both difficulties together by using a nonparametric variant of the
expectation-maximization (EM) algorithm
\citep{Dempster-Laird-Rubin-introduce-EM}.  Following the recent nonparametric
EM literature \citep{Hall-et-al-nonparametric-mixtures,
  Benaglia-et-al-bandwidth-selection-in-EM-like,
  Bordes-et-al-stochastic-semiparametric-EM}, we approximate the $\Prob{X_i
  \mid \epsilon(\ell^{-}_i) = s_j }$ in the log-likelihood with kernel density
estimators (KDEs) using a previous estimate $\widehat{\epsilon}^{(n)}$.  That
is we approximate \eqref{eq:predictive_loglik_complete} with
$\widehat{\ell}^{(n)}(\epsilon; \mathcal{D}, S_i)$:
\begin{equation}
  \label{eq:loglik_full_nonparametric_approximation}
  \sum_{i=1}^{N}{\sum_{j=1}^{K}{\Indicator{S_i = s_j} \log  \widehat{f} \left( {X_i \mid \widehat{\epsilon}^{(n)}(\ell^{-}_i) = s_j } \right)}},
\end{equation}
where an equivalent version of $\widehat{f} \left( {X_i \mid
    \widehat{\epsilon}^{(n)}(\ell^{-}_i) = s_j } \right)$ is given below in
\eqref{eq:wKDE_state}.

\section{EM Algorithm for Predictive State Space Assignment}
\label{sec:EM}

Since $\epsilon$ maps to $\mathcal{S}$, the hidden state variable $S_i$ and the
``parameter'' $\epsilon$ play the same role. This in turn results in similar E
and M steps.  Figure \ref{fig:EM_algorithm_overview} gives an overview of the
proposed algorithm.

\subsection{Expectation Step}
The E-step requires the expected log-likelihood
\begin{equation}
  \label{eq:E-step_general}
  Q(\epsilon \mid \epsilon^{(n)}) = \E_{\mathcal{S} \mid \mathcal{D};
    \epsilon^{(n)}} \ell(\epsilon; \mathcal{D}, S_i),
\end{equation}
where expectation is taken with respect to $\Prob{S_i = s_j\mid \mathcal{D};
  \epsilon^{(n)} }$, the conditional distribution of the hidden variable $S_i$
given the data $\mathcal{D}$ and the current estimate $\epsilon^{(n)}$.  Using
\eqref{eq:predictive_loglik_complete} we obtain
\begin{align}
  \label{eq:E-step_mixedLICORS}
  \begin{split}
    Q(\epsilon \mid \epsilon^{(n)}) = \sum_{i=1}^{N} \sum_{j=1}^{K} & \,\Prob{S_i = s_j \mid X_i, \ell^{-}_i; \epsilon^{(n)}(\ell^{-}_i)} \\
    & \times \log \Prob{ X_i \mid \epsilon^{(n)}(\ell^{-}_i) = s_j }.
  \end{split}
\end{align}

As for $\ell(\epsilon; \mathcal{D})$ we use KDEs and obtain an approximate
expected log-likelihood
\begin{align}
  \label{eq:E-step_LCs}
  \begin{split}
    \widehat{Q}^{(n)}(\epsilon \mid \epsilon^{(n)}) = \sum_{i=1}^{N} \sum_{j=1}^{K} & \, \Prob{S_i = s_j \mid X_i, \ell^{-}_i; \epsilon^{(n)}(\ell^{-}_i)} \\
    & \times \log \widehat{f} \left( X_i \mid
      \widehat{\epsilon}^{(n)}(\ell^{-}_i) = s_j \right)
  \end{split}
\end{align}
The conditional distribution of $S_i$ given its FLC and PLC, $\lbrace X_i,
\ell^{-}_i \rbrace $, comes from Bayes' rule,
\begin{align}
  \Prob{S_i = s_j \mid X_i, \ell^{-}_i } & \propto \Prob{X_i, \ell^{-}_i \mid S_i = s_j } \Prob{ S_i = s_j }  \nonumber\\
  \label{eq:prob_S_given_data_after_bayes}
  & = \ProbState{j}{ X_i } \ProbState{j}{ \ell^{-}_i } \Prob{ S_i = s_j },
\end{align}
the second equality following from the conditional independence of $X_i$ and
$\ell^{-}_i$ given the state $S_i$.

For brevity, let $w_{ij} := \Prob{S_i = s_j \mid X_i, \ell^{-}_i }$, an $N
\times K$ weight matrix $\vecS{W}$, whose rows are probability distributions
over states.  This $\vecS{w}_i$ is the soft-thresholding version of
$\epsilon(\ell^{-}_i)$, so we can write the expected log-likelihood in terms of
$\vecS{W}$,
\begin{align}
  \label{eq:expected_loglik_W}
  \widehat{Q}^{(n)}(\vecS{W} \mid \widehat{\vecS{W}}^{(n)})&= \sum_{i=1}^{N}
  \sum_{j=1}^{K}{ w_{ij} \cdot \log \Prob{X_i \mid \widehat{\vecS{W}}^{(n)}_j}}
\end{align}

The current $\widehat{\vecS{W}}^{(n)}$ can be used to update (conditional)
probabilities in \eqref{eq:prob_S_given_data_after_bayes} by
\begin{align}
  \widehat{w}_{ij}^{(n+1)} \propto &
  \quad 
  \label{eq:probhat_S_given_data_after_bayes}
  \widehat{f}(x_i \mid S_i = s_j; \widehat{\vecS{W}}^{(n)}) \\
  & \times \mathcal{N}\left(\ell_i^{-} \mid \widehat{\boldsymbol \mu}_{j}^{(n)}, \widehat{\boldsymbol \Sigma}_{j}^{(n)}; \widehat{\vecS{W}}^{(n)}  \right) \\
  & \times \frac{\widehat{N}_j^{(n)}}{N},
\end{align}
where
\begin{inparaenum}[i)]
\item $\widehat{N}_j^{(n)} = \sum_{i=1}^{N} \widehat{\vecS{W}}^{(n)}_{ij}$ is
  the effective sample size of state $s_j$,
\item $\widehat{\boldsymbol \mu}_{j}^{(n)}$ and $\widehat{\boldsymbol
    \Sigma}_{j}^{(n)}$ are weighted mean and covariance matrix estimators of
  the PLCs using the $j$th column of $\widehat{\vecS{W}}^{(n)}$, and
\item the FLC distribution is estimated with a weighted\footnote{We also tried
    a hard-threshold estimator, but we found that the soft-threshold KDE
    performed better.} KDE (wKDE)
  \begin{equation}
    \label{eq:wKDE_state}
    \widehat{f}(x \mid S = s_j; \widehat{\vecS{W}}^{(n)})  = \frac{1}{\widehat{N}_j^{(n)}} \sum_{r=1}^{N} \widehat{\vecS{W}}_{rj}^{(n)} K_{h_j}(\vnorm{x_r - x}),
  \end{equation}
  Here the weights are again the $j$th column of $\widehat{\vecS{W}}^{(n)}$,
  and $K_{h_j}$ is a kernel function with a state-dependent bandwidth $h_j$.
  We used a Gaussian kernel, and to get a good, cluster-adaptive bandwidth
  $h_j$, we pick out on those $x_i$ for which $\argmax_k{w_{ik}} = j$
  (hard-thresholding of weights; cf.\
  \citealt{Benaglia-et-al-bandwidth-selection-in-EM-like}) and apply
  Silverman's rule-of-thumb-bandwidth (\texttt{bw.ndr0} in the R function
  \texttt{density}).
\end{inparaenum}
After estimation, we normalize each $\widehat{\vecS{w}}_i$ in
\eqref{eq:probhat_S_given_data_after_bayes}, $\widehat{w}_{ij}^{(n+1)}
\leftarrow
\frac{\widehat{w}_{ij}^{(n+1)}}{\sum_{j=1}^K{\widehat{w}_{ij}^{(n+1)}}}$.

Ideally, we would use a non-parametric estimate for the PLC distribution, e.g.,
forest density estimators
\citep{Liu-et-al-forest-density-estimation,Chow-Liu-trees}.  Currently,
however, such estimators are too slow to handle many iterations at large $N$,
so we model state-conditional PLC distributions as multivariate Gaussians.
Simulations suggest that this is often adequate in practice.

\subsection{Approximate Maximization Step}

In parametric problems, the M-step solves
\begin{equation}
\label{eq:M-step_general}
\epsilon^{(n+1)} = \argmax_{\epsilon}{\widehat{Q}^{(n)}(\epsilon \mid \epsilon^{(n)})},
\end{equation}
to improve the estimate.  Starting from a guess $\epsilon^{(0)}$, the EM
algorithm iterates \eqref{eq:E-step_general} and \eqref{eq:M-step_general}
to convergence.

In nonparametric problems, finding an $\epsilon^{(n+1)}$ that increases
$\widehat{Q}^{(n)}(\epsilon \mid \epsilon^{(n)})$ is difficult, since wKDEs
with non-zero bandwidth are not maximizing the likelihood; they are not even
guaranteed to increase it.  Optimizing \eqref{eq:E-step_LCs} by brute force
isn't computationally feasible either, as it would mean searching $K^N$ state
assignments (cf.\ \citealt{Bordes-et-al-stochastic-semiparametric-EM}).

\begin{figure}[!b]
  \begin{center}
    \fbox{
    \hspace{-0.4cm}
      \begin{minipage}{0.47\textwidth}
        \begin{enumerate}
        \setcounter{enumi}{-1}
      \item \label{step:initialize_theta} \textbf{Initialization:} Set $n =
        0$. Split data $\mathcal{D} = \lbrace X_i, \ell_i^{-}
        \rbrace_{i=1}^{N}$ in $\mathcal{D}_{train}$ and
        $\mathcal{D}_{test}$. Initialize states randomly from $\lbrace s_1,
        \ldots, s_K \rbrace$ $\rightarrow$ Boolean $\widehat{\vecS{W}}^{(0)}$.
        
      \item \label{step:E-step} \textbf{E-step:} Obtain updated
        $\widehat{\vecS{W}}^{(n+1)}$ via
        \eqref{eq:probhat_S_given_data_after_bayes}.
      \item \label{step:M-step} \textbf{Approximate M-step:} Update mixture
        pdfs $\Prob{x_i \mid S_i = s_j}$ via \eqref{eq:wKDE_state} with
        $\widehat{\vecS{W}}^{(n+1)}$.
      \item \label{step:prediction-step} \textbf{Out-of-sample Prediction:}
        Evaluate out-of-sample MSE for $\widehat{\vecS{W}}^{(n+1)}$ by
        predicting FLCs from PLCs in $\mathcal{D}_{test}$. Set $n = n + 1$.
      \item \label{step:temp_convergence} \textbf{Temporary convergence:}
        Iterate \ref{step:E-step} - \ref{step:prediction-step} until
        \begin{equation}
          \vnorm{ \widehat{\vecS{W}}^{(n)} - \widehat{\vecS{W}}^{(n-1)} } < \delta 
        \end{equation}
      \item \label{step:merging-step} \textbf{Merging:} Estimate pairwise
        distances (or test)
        \begin{equation}
          \label{eq:distance_between_state_dists}
          \widehat{d}_{jk} = \operatorname{dist}\left( \widehat{f}^{(n)}_j, \widehat{f}^{(n)}_k \right) \forall j,k = 1, \ldots, K.
        \end{equation} 
        \begin{enumerate}
        \item If $K > 1$: determine $(j^{(\min)}, k^{(\min)}) = \argmin_{j\neq
            k} \widehat{d}_{jk}$ and merge these columns 
          \begin{equation}
            \vecS{W}^{(n)}_{j^{(\min)}} \leftarrow \vecS{W}^{(n)}_{j^{(\min)}} + \vecS{W}^{(n)}_{k^{(\min)}}
          \end{equation}
          Omit column $k^{(\min)}$ from $\vecS{W}^{(n)}_{k^{(\min)}}$, set $K =
          K - 1$, and re-start iterations at \ref{step:E-step}.
        \item If $K = 1$: return $\widehat{\vecS{W}}^{*}$ with lowest
          out-of-sample MSE.
        \end{enumerate}
      \end{enumerate}
    \end{minipage}
  }
\end{center}
\caption[Outline of mixed LICORS: nonparametric EM algorithm for predictive
state estimation]{\label{fig:EM_algorithm_overview} Mixed LICORS: nonparametric
  EM algorithm for predictive state recovery.}
\end{figure}

However, in our particular setting the parameter space and the expectation of
the hidden variable coincide, since $\widehat{\vecS{w}}_i$ is a
soft-thresholding version of $\epsilon(\ell^{-}_i)$.  Furthermore, none of the
estimates above requires a deterministic $\epsilon$ mapping; they are all
weighted MLEs or KDEs.  Thus, like
\citet{Bengalia-et-al-EM-like-nonparametric}, we take the weights from the
E-step, $\widehat{\vecS{W}}^{(n+1)}$, to update each component distribution
using \eqref{eq:wKDE_state}. This in turn can then be plugged into
\eqref{eq:loglik_full_nonparametric_approximation} to update the likelihood
function, and in \eqref{eq:probhat_S_given_data_after_bayes} for the next
E-step.

The wKDE update does not solve \eqref{eq:M-step_general} nor does it provably
increase the log-likelihood (although in simulations it often does so).  We
thus use cross-validation (CV) to select the best $\widehat{\vecS{W}}^{*}$, and
henceforth do not rely on an ever-increasing log-likelihood as the usual
stopping rule in EM algorithms (see Section \ref{sec:simulations} for details).

\subsection{Data-driven Choice of $K$: Merge States To Obtain Minimal
  Sufficiency}
\label{sec:merging}

One advantage of the mixture model in
\eqref{eq:joint_pdf_product_of_pred_states} is that predictive states have, by
definition, comparable conditional distributions.  Since conditional densities
can be tested for equality by a nonparametric two-sample test (or using a
distribution metric), we can merge nearby classes. We thus propose a
data-driven automatic selection of $K$, which solves this key challenge in
fitting mixture models:
\begin{inparaenum}[1)]
\item start with a sufficiently large number of clusters, $K_{\max} < N$;
\item test for equality of predictive distribution each time the EM reaches a
  (local) optimum;
\item merge until $K=1$ (iid case) -- step \ref{step:merging-step} in Fig.\
  \ref{fig:EM_algorithm_overview};
\item choose the best model $\widehat{\mathbf{W}}^*$ (and thus $K^*$) by CV.
\end{inparaenum}

\section{Forecasting Given New Data}
\label{sec:prediction_mixedLICORS}

The estimate $\widehat{\vecS{W}}^{*}$ can be used to forecast $\tilde{X} $
given a new $\tilde{\ell}^{-}$.  Integrating out $S_i$ yields a mixture
\begin{align}
\label{eq:predictive_distribution_new_PLC}
\Prob{\tilde{X} \mid \tilde{\ell}^{-} } & = \sum_{j=1}^{K} \Prob{\tilde{S} = s_j \mid \tilde{\ell}^{-} } \cdot  \ProbState{j}{\tilde{X}}.
\end{align}
As $\ProbState{j}{\tilde{X}} = \Prob{\tilde{X} = x \mid \tilde{S} = s_j }$ is
independent of $\tilde{\ell}^{-}$ we do not have to re-estimate them for each $
\tilde{\ell}^{-}$, but can use the wKDEs in \eqref{eq:wKDE_state} from the
training data.

The mixture weights $\tilde{w}_j := \Prob{\tilde{S} = s_j \mid \tilde{\ell}^{-}
}$ are in general different for each PLC and can again be estimated using
Bayes's rule (with the important difference that now we only condition on
$\tilde{\ell}^{-}$, not on $\tilde{X}$):
\begin{align}
\label{eq:update_weights_Bayes_PLC_only_mixedLICORS}
\widehat{\vecS{w}}_j(\tilde{\ell}^{-}; \widehat{\vecS{W}}^{*})
 & \propto \ProbHat{\tilde{\ell}^{-} \mid \tilde{S} = s_j; \widehat{\vecS{W}}^{*}} \times \ProbHat{ \tilde{S} = s_j; \widehat{\vecS{W}}^{*} } \nonumber \\
& = \mathcal{N}\left(\tilde{\ell}^{-}; \widehat{\boldsymbol \mu}_{(j)}^*, \widehat{\boldsymbol \Sigma}_{(j)}^*; \widehat{\vecS{W}}^{*}  \right) \times \frac{\widehat{N}_j^{*}}{N}.
\end{align}
After re-normalization of $\widehat{\tilde{\vecS{w}}} = (
\widehat{\tilde{w}}_1^{*}, \ldots, \widehat{\tilde{w}}_K^{*} )$, the predictive
distribution \eqref{eq:predictive_distribution_new_PLC} can be estimated via
\begin{align}
\label{eq:estimated_predictive_distribution_new_PLC}
\ProbHat{\tilde{X} = x \mid \tilde{\ell}^{-} } & = \sum_{j=1}^{K} \widehat{\tilde{w}}_j^{*} \cdot \widehat{f}\left(\tilde{X} = x \mid S_i = s_j; \widehat{\vecS{W}}^{(*)}\right).
\end{align}

A point forecast can then be obtained by a weighted combination of point
estimates in each component (e.g.\ weighted mean), or by the mode of the full
distribution.  In the simulations we use the weighted average from each
component as the prediction of $\tilde{X}$.

\section{Simulations}
\label{sec:simulations}

To evaluate the
predictive ability of mixed LICORS in a practical, non-asymptotic context 
we use the following simulation.
The continuous-valued $(1+1)D$ field $\field{X}{r}{t}$ has a discrete latent
state space $\field{d}{r}{t}$. The observable field $\field{X}{r}{t}$ evolves
according to a conditional Gaussian distribution,
\small
\begin{align}
  \label{eq:CA_cont_predictive_rule_normal}
  \Prob{\field{X}{r}{t} \mid \field{d}{r}{t}} & = 
  \begin{cases}
    \mathcal{N}(\field{d}{r}{t}, 1),& \text{ if } | \field{d}{r}{t} | < 4, \\
    \mathcal{N}(0, 1), & \text {otherwise},
  \end{cases} \\
  \label{eq:CA_cont_predictive_rule_normal_initial_conditions}
  \text {and initial conditions: } & \field{X}{\cdot}{1} = \field{X}{\cdot}{2}
  = \boldsymbol 0 \in \R^{\card{S}}.
\end{align}
\normalsize
 The state space $\field{d}{r}{t}$ evolves with the observable
field,
\begin{align}
  \begin{split}
  \label{eq:state_description}
  \field{d}{r}{t}  = & \left[ \frac{\sum_{i=-2}^{2}{\field{X}{(r+i) \bmod \card{S}}{t-2}}}{5} \right. \\
  & - \left.\frac{\sum_{i=-1}^{1}{\field{X}{(r+i) \bmod \card{S}}{t-1}}}{3} \right]
  \end{split}
\end{align}
where $[x]$ is the closest integer to $x$.  In words, Eq.\
\eqref{eq:state_description} says that the latent state $\field{d}{r}{t}$ is
the rounded difference between the sample average of the $5$ nearest sites at
$t-2$ and the sample average of the $3$ nearest sites at $t-1$.  Thus $h_p = 2$
and $c=1$.

If we include the present in the FLC, \eqref{eq:CA_cont_predictive_rule_normal}
gives $h_f = 0$, making FLC distributions one-dimensional.  As
$\field{d}{r}{t}$ is integer-valued and the conditional distribution becomes
$\mathcal{N}(0,1)$ if $ | \field{d}{r}{t} | > 4$, the system has $7$ predictive
states, $\lbrace s_{-3}, s_{-2}, \ldots, s_{2}, s_{3} \rbrace$, distinguished
by the conditional mean $\E \left(\field{X}{r}{t} \mid s_k \right) = k$. Thus
$\field{X}{r}{t} \mid s_k = \mathcal{N}(k, 1)$, $k = -3, -2, \ldots, 2, 3$.

\begin{figure}[!t]
  \centering
  \begin{subfigure}[t]{0.49\textwidth}
    \centering
    \includegraphics[width=\textwidth]{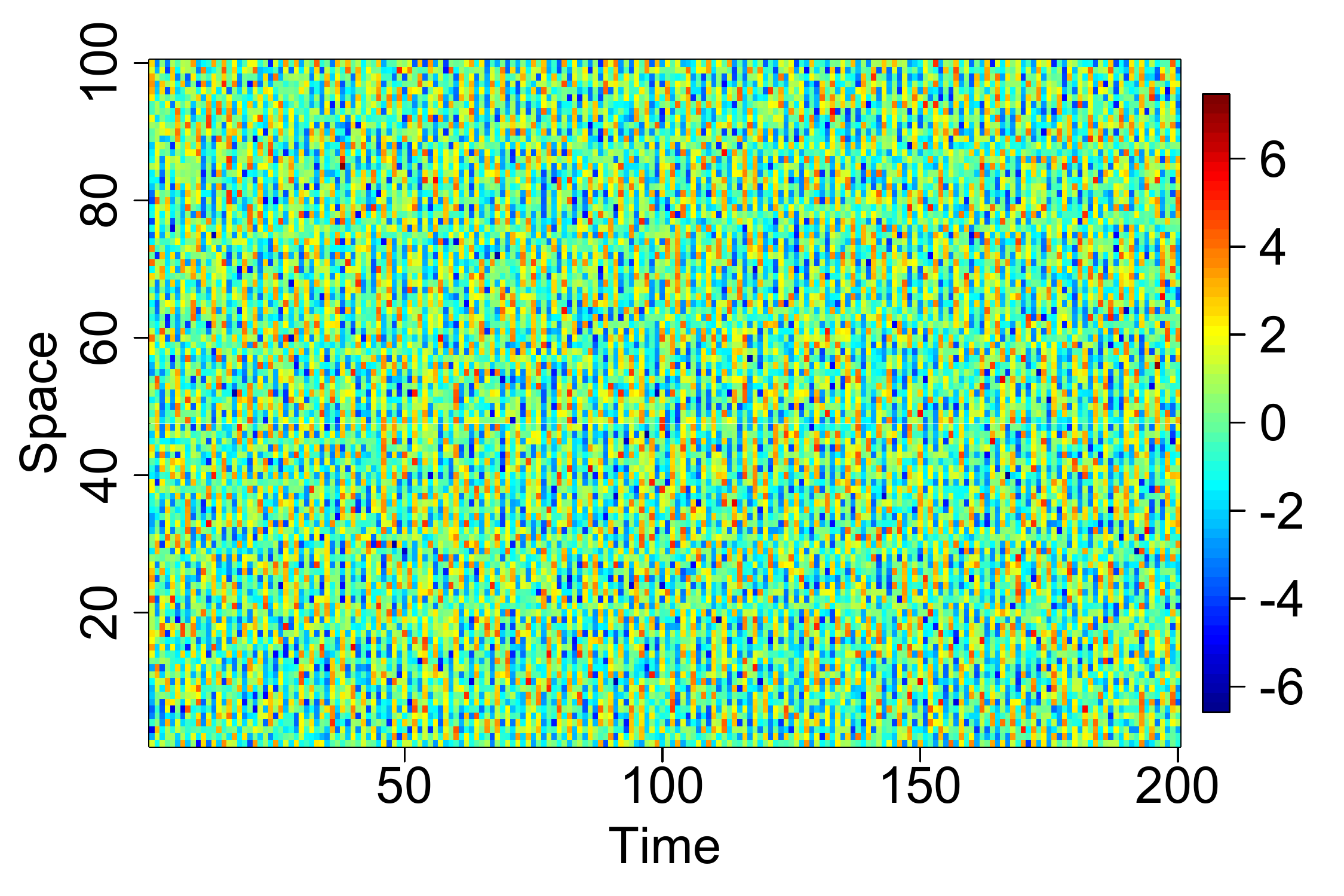}
    \caption{\label{fig:cont_CA_round_normal_right_oriented_observed} observed
      field $\field{X}{r}{t}$}
  \end{subfigure}%
  
  \begin{subfigure}[t]{0.49\textwidth}
    \centering
    \includegraphics[width=\textwidth]{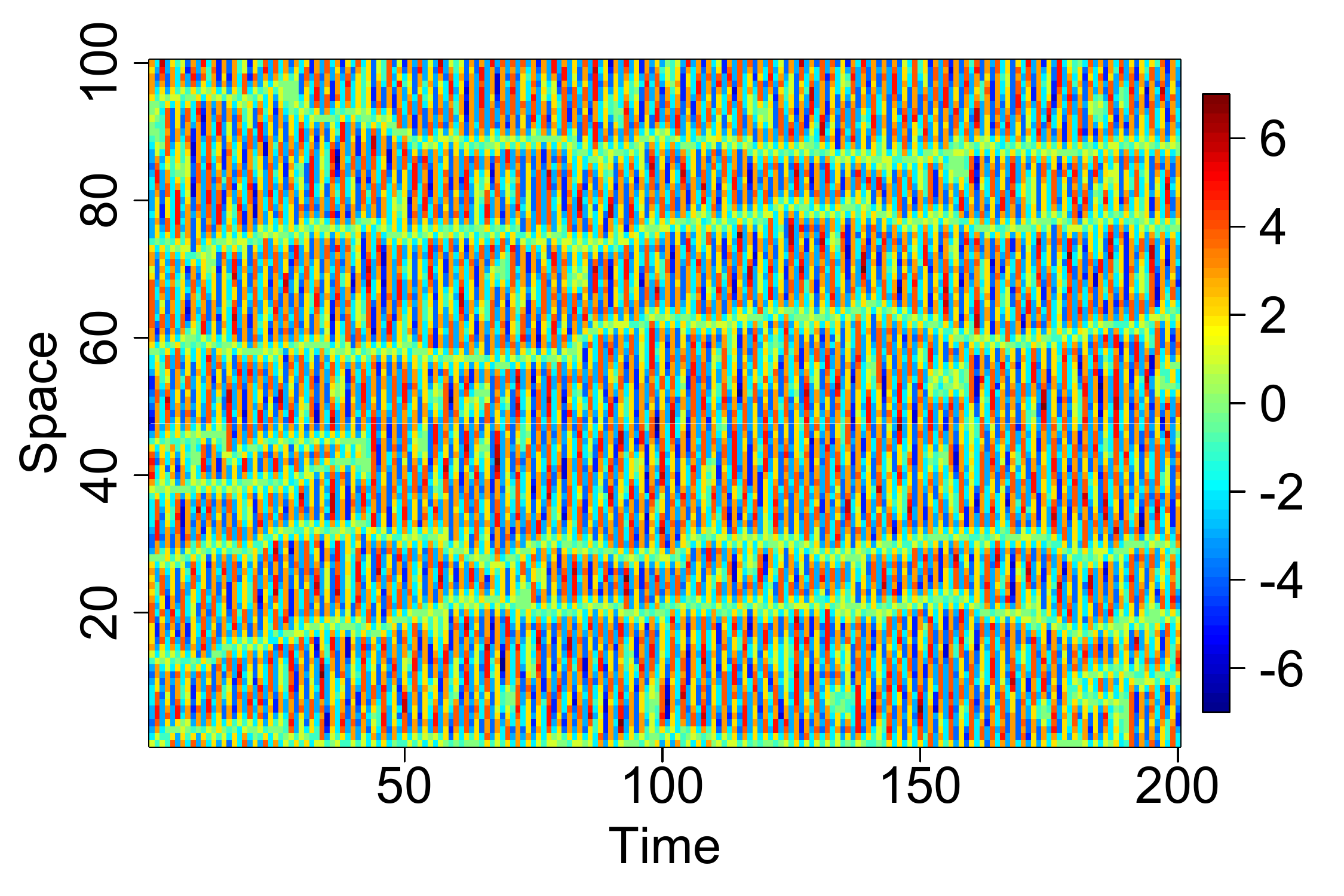}
    \caption{\label{fig:cont_CA_round_normal_right_oriented_states} state-space
      $\field{d}{r}{t}$}
  \end{subfigure}
  \caption{\label{fig:cont_CA_round_states_Simulations} Simulation of
    \eqref{eq:CA_cont_predictive_rule_normal} -- \eqref{eq:state_description}.}
\end{figure}

\begin{figure}[!t]
  \centering
  \includegraphics[width=.49\textwidth]{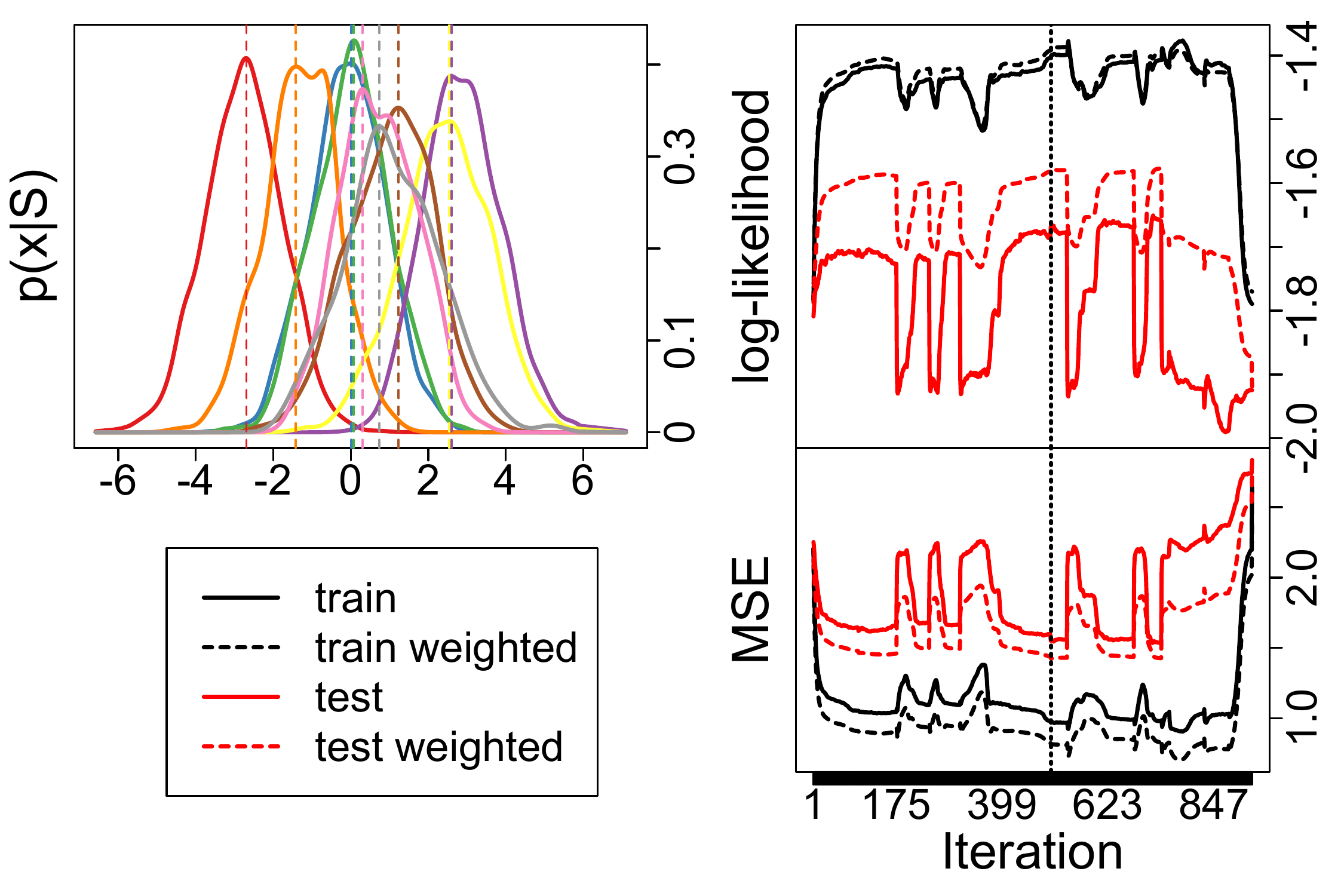}
  \caption[Summary of mixed LICORS estimates]{Mixed LICORS with $h_p = 2$ and
    $K_{\max} = 15$ starting states. Nonparametric estimates of 
    conditional predictive distributions $\Prob{X = x \mid S = s_j}$ (top-left); 
    trace plots for log-likelihood and MSE (right).}
  \label{fig:AISTATS_cont_CA_round_normal_mixedLICORS_analytics_none_sparsity_lambda_0}
\end{figure}

Figure \ref{fig:cont_CA_round_states_Simulations} shows one realization of
\eqref{eq:CA_cont_predictive_rule_normal} -- \eqref{eq:state_description} for
$\vecS{S} = \lbrace 1, \ldots, 100 \rbrace$ (vertical) and $t = 1, \ldots, T =
200$ (left to right), where we discarded the first $100$ time steps as burn-in
to avoid too much dependence on the initial conditions
\eqref{eq:CA_cont_predictive_rule_normal_initial_conditions}. While the
(usually unobserved) state space has distinctive green temporal traces and also
alternating red and blue patches, the observed field is too noisy to clearly
see any of these patterns.\footnote{All computation was done in R \citep{R10}.}

\begin{figure*}[!t]
  \centering
  \begin{subfigure}[t]{0.32\textwidth}
    \centering
    \includegraphics[width=\textwidth]{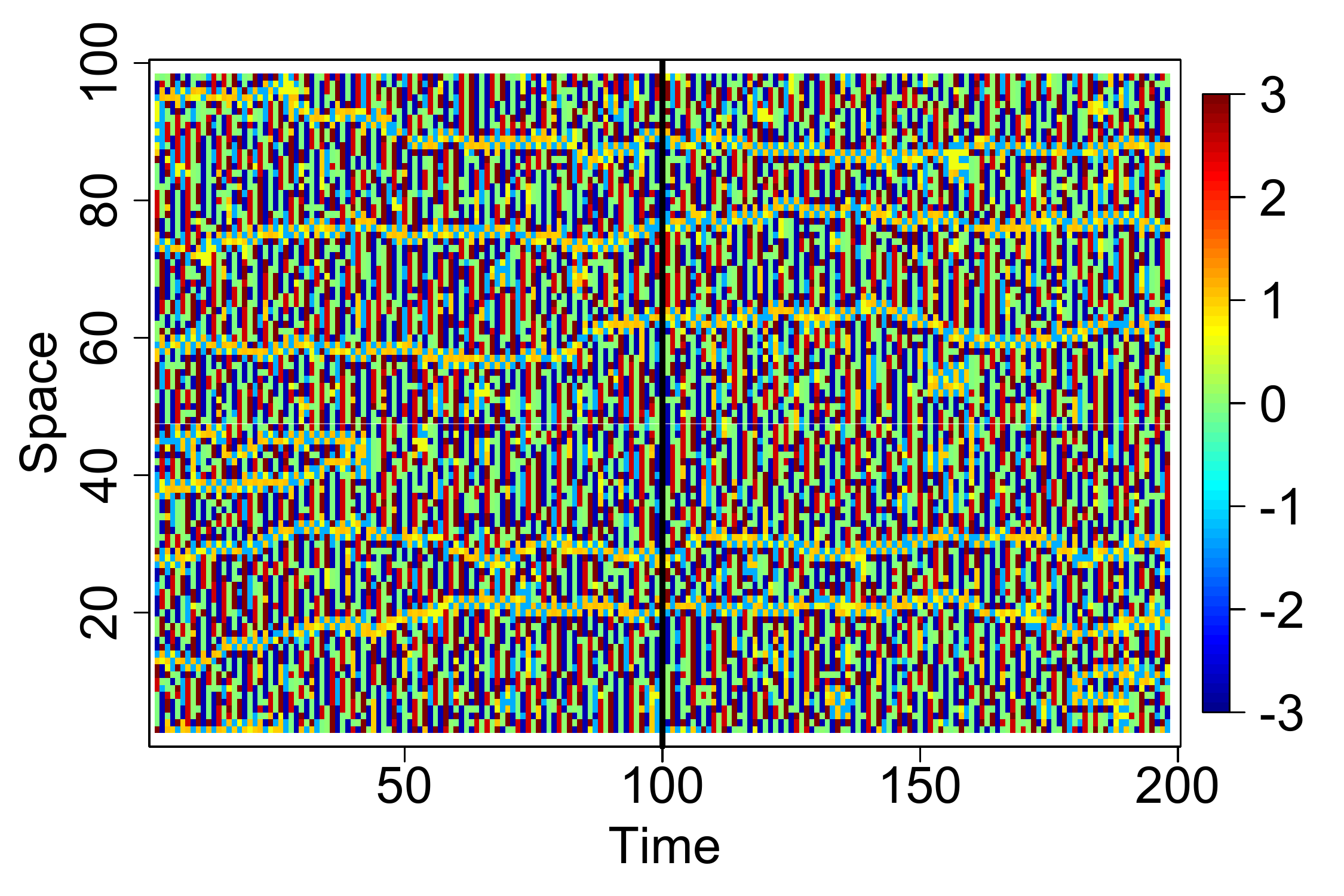}
    \caption{\label{fig:cont_CA_round_normal_right_oriented_predictions_deterministic_sparsity_mixedLICORS}
      estimated predictive states
      $\field{\widehat{S}}{r}{t}$} 
  \end{subfigure}
  \begin{subfigure}[t]{0.32\textwidth}
    \centering
    \includegraphics[width=\textwidth]{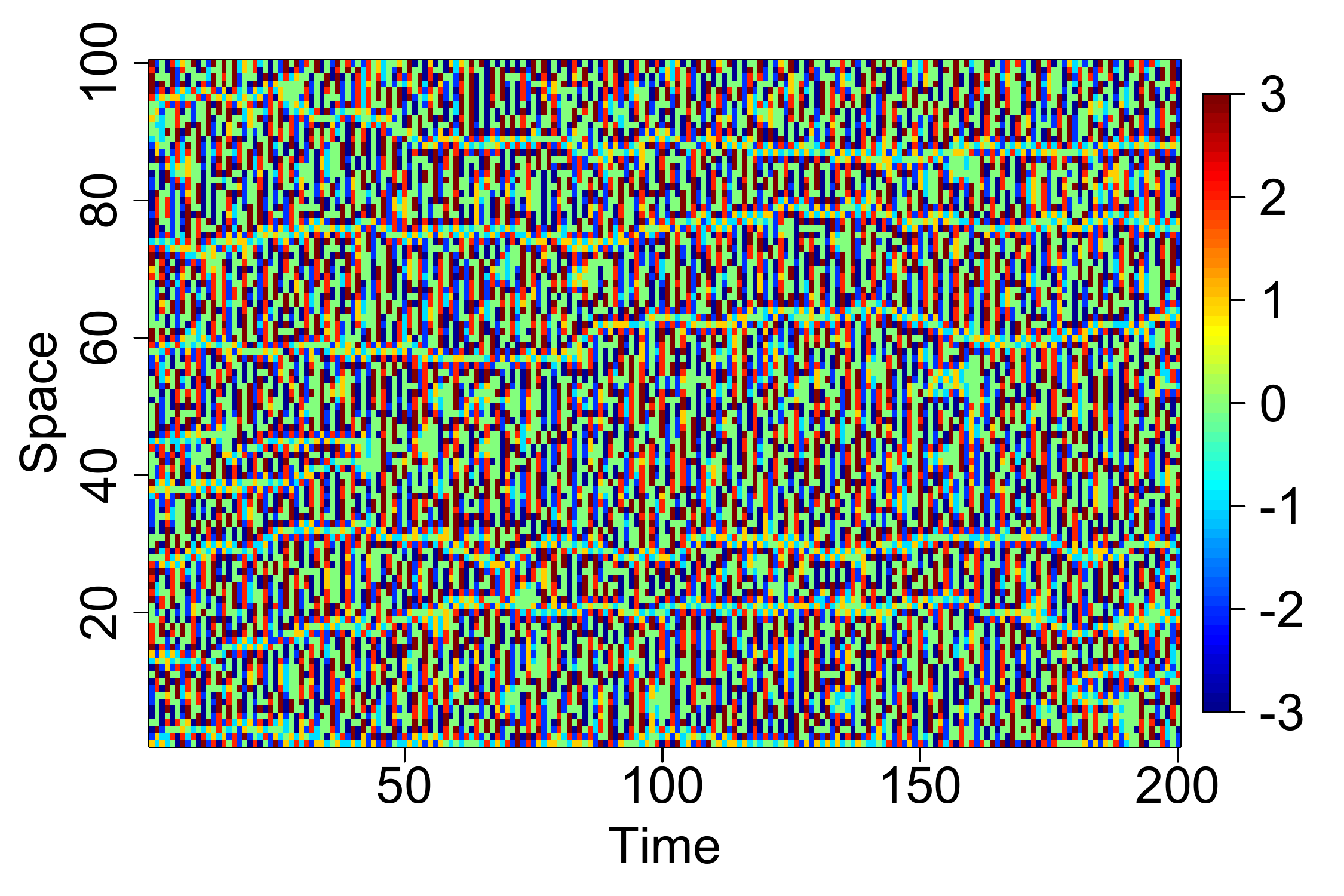}
    \caption{\label{fig:cont_CA_round_normal_right_oriented_predictive_states}
      true predictive states $\field{S}{r}{t}$}
  \end{subfigure}
  \begin{subfigure}[t]{0.32\textwidth}
    \centering
    \includegraphics[width=\textwidth]{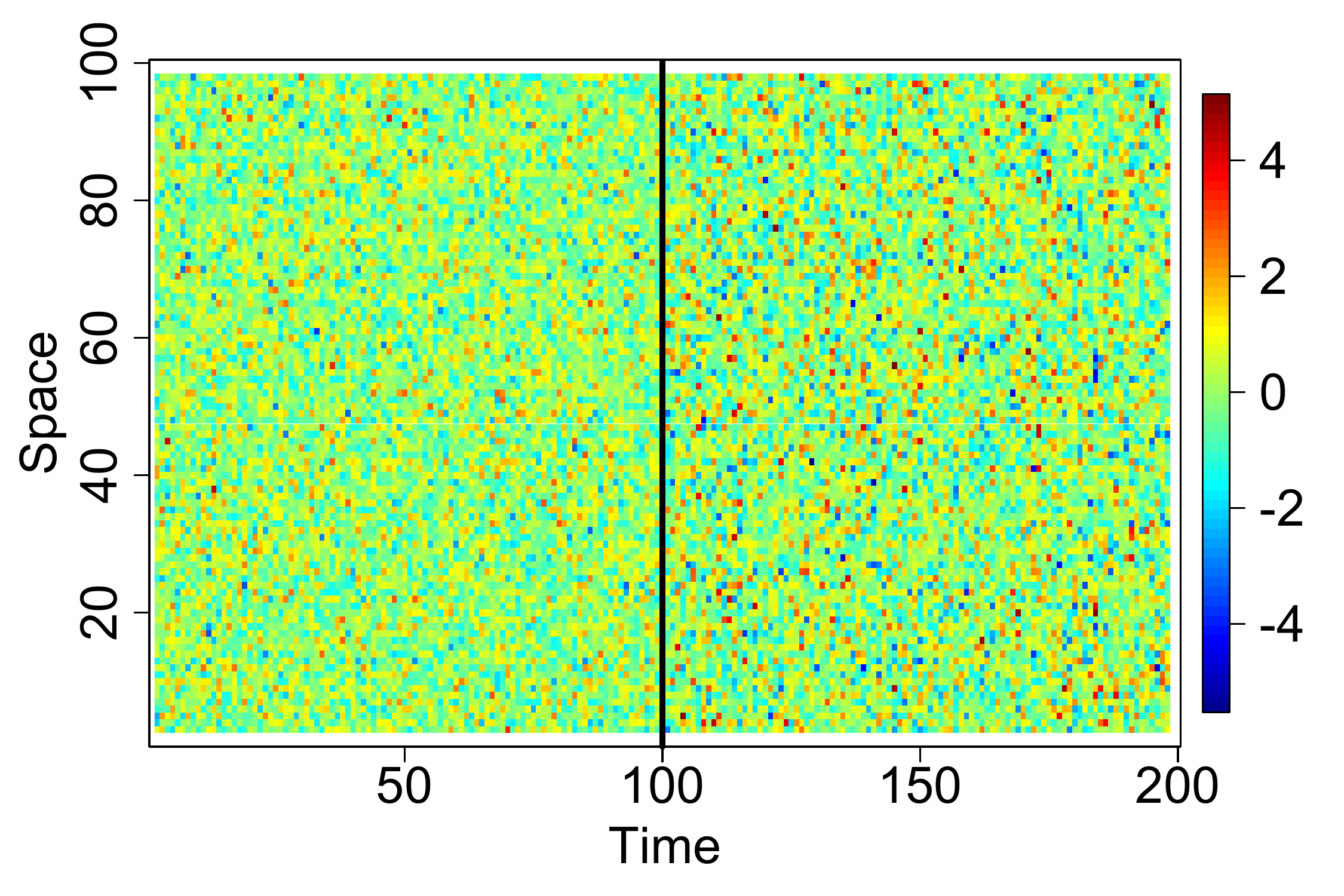}
    \caption{\label{fig:cont_CA_round_normal_right_oriented_residuals_deterministic_sparsity_mixedLICORS}
      residuals}
  \end{subfigure}
  \caption[Mixed LICORS model check: (a) truth; (b) estimate; (c)
  residuals.]{\label{fig:cont_CA_round_states_LICORS} Mixed LICORS model fit
    and residual check.}
\end{figure*}

Figure \ref{fig:AISTATS_cont_CA_round_normal_mixedLICORS_analytics_none_sparsity_lambda_0} summarizes one run of mixed LICORS with $K = 15$
initial states, $h_p = 2$, and $L_1$ distance in \eqref{eq:distance_between_state_dists}. The first $100$ time steps were used as training
data, and the second half as test data. The optimal $\widehat{\vecS{W}}^*$,
which minimized the out-of-sample weighted MSE, occurred at iteration $502$,
with $\widehat{K} = 9$ estimated predictive states.  The trace plots 
show large temporary drops (increases) in the log-likelihood (MSE) whenever 
the EM reaches a local optimum and merges two states. After
merging, the forecasting performance and log-likelihood quickly return to ---
or even surpass --- previous optima. 

The predictions from $\widehat{\vecS{W}}^*$ in Fig.\
\ref{fig:cont_CA_round_normal_right_oriented_predictions_deterministic_sparsity_mixedLICORS}
show that mixed LICORS is practically unbiased -- compare to the visually
indistinguishable true state space in Fig.\
\ref{fig:cont_CA_round_normal_right_oriented_predictive_states}.  The residuals
in Fig.\
\ref{fig:cont_CA_round_normal_right_oriented_residuals_deterministic_sparsity_mixedLICORS}
show no obvious patterns except for a larger variance in the right half
(training vs.\ test data).

\subsection{Mixed versus Hard LICORS}

Mixed LICORS does better than hard LICORS at forecasting. We use $100$
independent realizations of \eqref{eq:CA_cont_predictive_rule_normal} --
\eqref{eq:state_description} and for each one we train the model on the first
half, and test it on the second (future) half. Lower out-of-sample future MSEs
are, of course, better.

We use EM as outlined in Fig.\ \ref{fig:EM_algorithm_overview} with $K_{\max} =
15$ states, $10^3$ maximum number of iterations. We keep the estimate
$\widehat{\vecS{W}}^*$ with the lowest out-of-sample MSE over $10$ independent
runs. The first run is initialized with a K-means++ clustering
\citep{k-means-plus-plus} on the PLC space; state initializations in the
remaining nine runs were uniformly at random from $\mathcal{S} = \lbrace s_1,
\ldots, s_{15} \rbrace$.

To test whether mixed LICORS accurately estimates the mapping $\epsilon:
\ell^{-} \mapsto \mathcal{S}$, we also predict FLCs of an independently
generated realization of the same process.  If the out-of-sample MSE for the
independent field is the same as the out-of-sample MSE for the future evolution
of the training data, then mixed LICORS is not just memorizing fluctuations in
any given realization, but estimates characteristics of the random field.  We
calculated both weighted-mixture forecasts, and the forecast of the state with
the highest weight.

\begin{figure}[!b]
  \centering
  \includegraphics[width=.48\textwidth, trim = 0cm 0cm 0.5cm 0.5cm, clip =
  true]{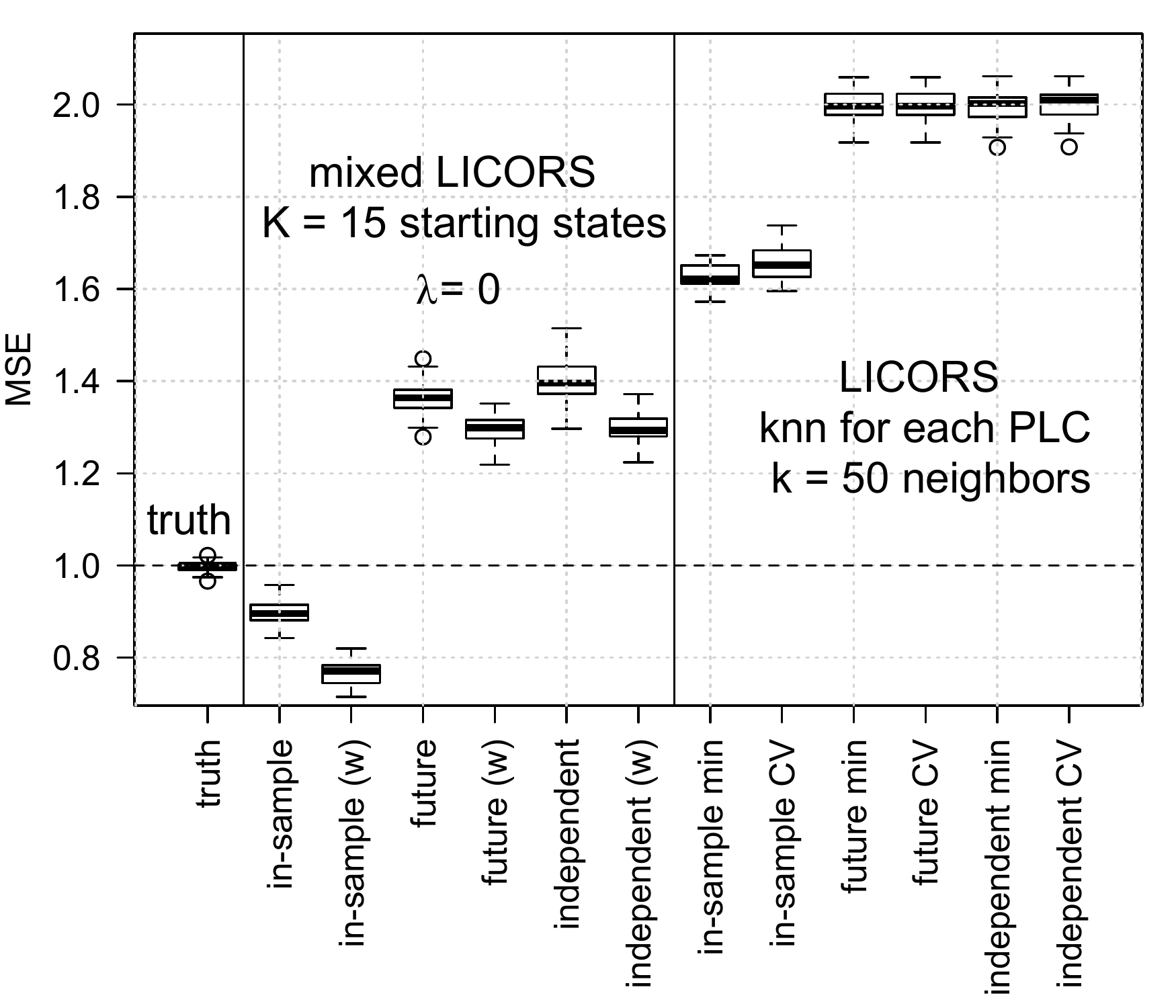}
  \caption{\label{fig:ContCA_h_p=2_sd1_trunc4_outofsampleCV_MSE_mixedLICORS_none_sparsity_alpha0_1_startstates15_boxplot}
    Comparing mixed and hard LICORS by forecast MSEs.}
\end{figure}

Figure
\ref{fig:ContCA_h_p=2_sd1_trunc4_outofsampleCV_MSE_mixedLICORS_none_sparsity_alpha0_1_startstates15_boxplot}
shows the results. Mixed LICORS greatly improves upon the hard LICORS
estimates, with up to $34\%$ reduction in out-of-sample MSE. As with hard
LICORS, the MSE for future and independent realizations are essentially the
same, showing that mixed LICORS generalizes from one realization to the whole
process.  As the weights often already are $0$/$1$ assignments,
weighted-mixture prediction is only slightly better than using the
highest-weight state.

\subsection{Simulating New Realizations of The Underlying System}

Recall that all simulations use
\eqref{eq:CA_cont_predictive_rule_normal_initial_conditions} as initial
conditions.  If we want to know the effect of different starting conditions,
then we can simply use Eqs.\ \eqref{eq:CA_cont_predictive_rule_normal} \&
\eqref{eq:state_description} to simulate that process, since they fully specify
the evolution of the stochastic process. In experimental studies, however,
researchers usually lack such generative models; learning one is often the
point of the experiment in the first place.

Since mixed LICORS estimates joint and conditional predictive distributions,
and not only the conditional mean, it is possible to simulate a new realization
from an estimated model.  Figure \ref{fig:simulation_overview} outlines this
simulation procedure.  We will now demonstrate that mixed LICORS can be used
instead to simulate from different initial conditions \emph{without} knowing
Eqs.\ \eqref{eq:CA_cont_predictive_rule_normal} \&
\eqref{eq:state_description}.

\begin{figure}[!t]
  \begin{center}
    \fbox{
    \hspace{-0.4cm}
      \begin{minipage}{0.47\textwidth}
        \begin{enumerate}
        \setcounter{enumi}{-1}
        \item Initialize field $\lbrace \field{X}{\cdot}{1-\tau} \rbrace_{\tau = 1}^{h_p}$. Set $t = 1$.
        \item \label{item:fetch} Fetch all PLCs at time $t$: $P_{t} = \lbrace \field{\ell^{-}}{r}{t} \rbrace_{\vecS{r} \in \vecS{S}^{\operatorname{new}}}$
        \item For each $\field{\ell^{-}}{r}{t} \in P_{t}$:
        \begin{enumerate}
        \item \label{step:draw_state} draw state $s_j \sim \Prob{S = s_j \mid  \field{\ell^{-}}{r}{t}}$ 
        \item \label{step:draw_observation_given_state} draw $\field{X}{r}{t} \sim \Prob{x \mid S = s_j}$ 
        \end{enumerate}
        \item If $t < t_{\max}$, set $t = t+1$ and go to step \ref{item:fetch}. Otherwise return simulation $\lbrace \field{X}{\cdot}{t} \rbrace_{t = 1}^{t_{\max}}$.
        \end{enumerate}
      \end{minipage}
    }
  \end{center}
  \caption[Simulating new realization from estimated
  model]{\label{fig:simulation_overview} Simulate new realization from
    spatio-temporal process on $\vecS{S}^{\operatorname{new}} \times \lbrace 1,
    \ldots, t_{\max} \rbrace$ using true and estimated dynamics (use
    \eqref{eq:update_weights_Bayes_PLC_only_mixedLICORS} in
    \ref{step:draw_state}; \eqref{eq:wKDE_state} in
    \ref{step:draw_observation_given_state}).}
\end{figure}

\begin{figure}[!t]
        \centering
        \begin{subfigure}[t]{0.49\textwidth}
                \centering
                \includegraphics[width=\textwidth]{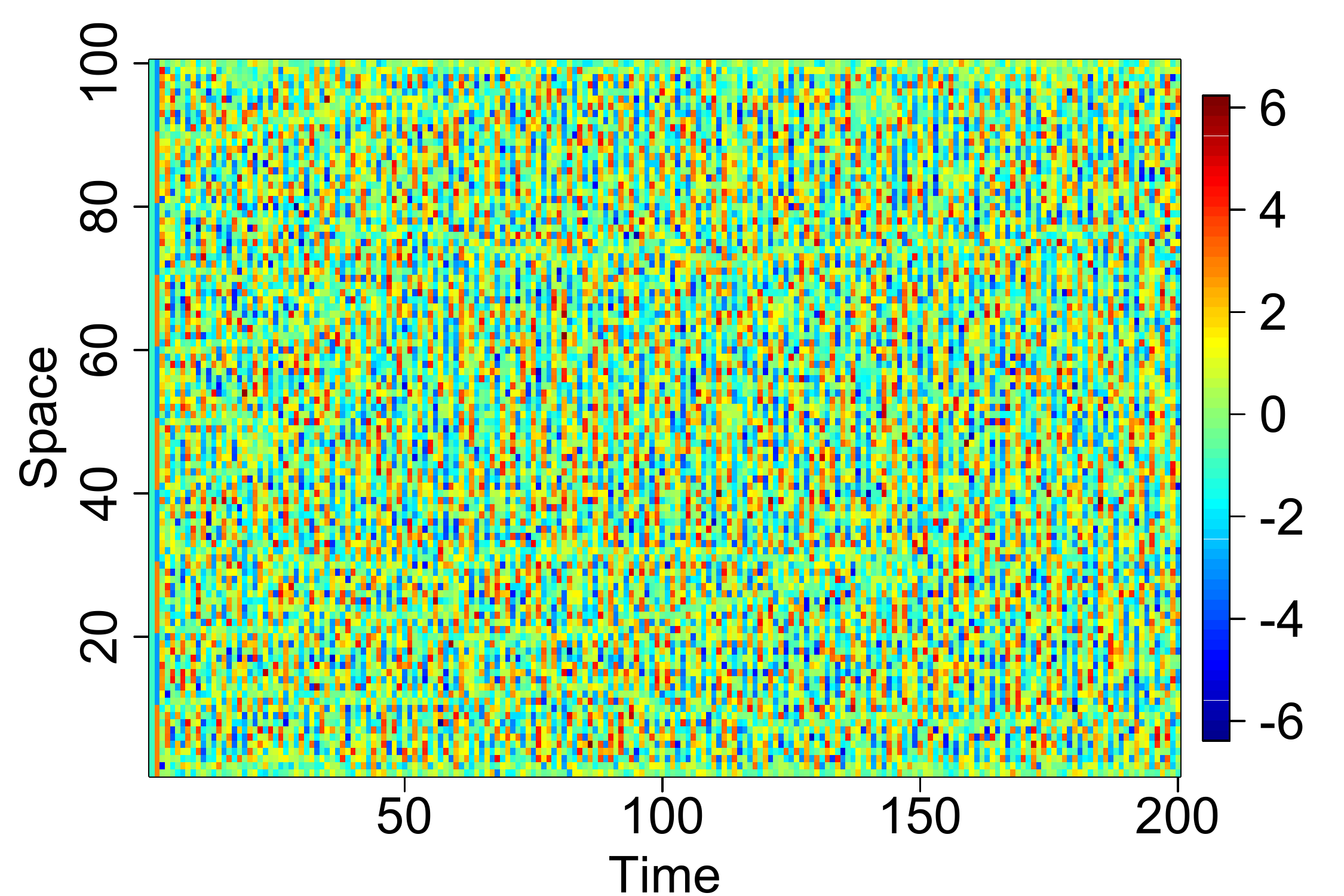}
                \caption{\label{fig:cont_CA_round_normal_right_oriented_simulated_data_true_dynamics} simulations from true dynamics}
        \end{subfigure}%

        \begin{subfigure}[t]{0.49\textwidth}
                \centering
                \includegraphics[width=\textwidth]{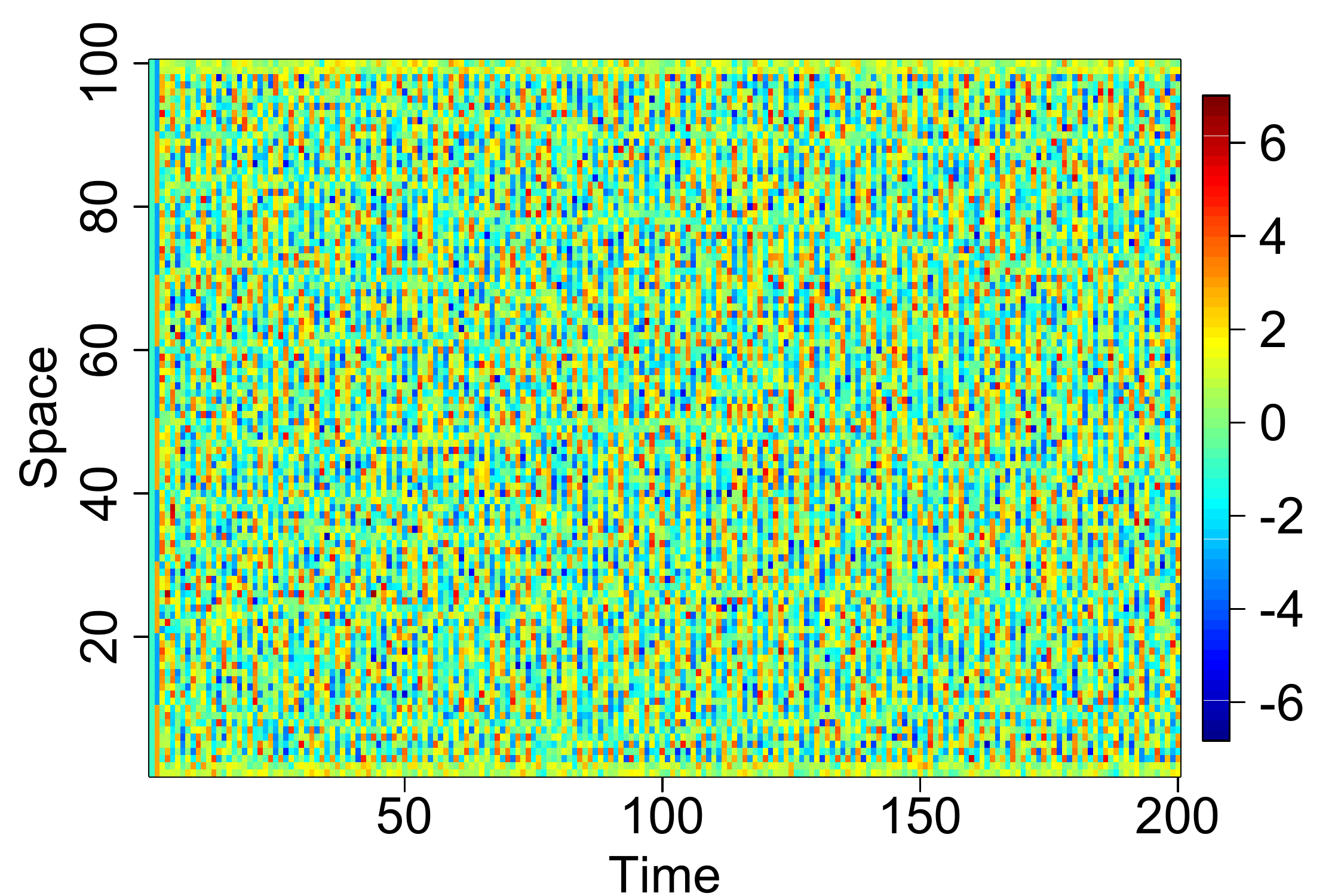}
                \caption{\label{fig:cont_CA_round_normal_right_oriented_simulated_data_deterministic_sparsity_mixedLICORS} simulations from estimated dynamics}
        \end{subfigure}
        

        
        \caption[Mixed LICORS simulations: (a) from the estimated model; (b) from the truth.]{\label{fig:cont_CA_round_normal_simulations} Simulating another realization of \eqref{eq:CA_cont_predictive_rule_normal} \& \eqref{eq:state_description} but with different starting conditions.}
\end{figure}

For example, Fig.\
\ref{fig:cont_CA_round_normal_right_oriented_simulated_data_true_dynamics}
shows a simulation
using the true mechanisms in \eqref{eq:CA_cont_predictive_rule_normal} \&
\eqref{eq:state_description} with starting conditions $\field{X}{\cdot}{1} =
-\boldsymbol 1$ and $\field{X}{\cdot}{2} = \pm \boldsymbol 3 \in
\R^{\card{\vecS{S}}}$ in alternating patches of ten times $3$, ten times $-3$,
ten times $3$, etc. (total of $10$ patches since $\card{\vecS{S}} = 100$). The
first couple of columns (on the left) are influenced by different starting
conditions, but the initial effect dies out soon (since $h_p = 2$) and similar
structures (left to right traces) as in simulations with
\eqref{eq:CA_cont_predictive_rule_normal_initial_conditions} emerge (Fig.\
\ref{fig:cont_CA_round_states_Simulations}).

Figure
\ref{fig:cont_CA_round_normal_right_oriented_simulated_data_deterministic_sparsity_mixedLICORS}
shows simulations solely using the mixed LICORS estimates in Fig.\
\ref{fig:AISTATS_cont_CA_round_normal_mixedLICORS_analytics_none_sparsity_lambda_0}.  While the patterns are quantitatively different (due to
random sampling), the qualitative structures are strikingly similar.  Thus mixed \method can not only accurately estimate
$\field{S}{r}{t}$, but also learn the optimal prediction rule
\eqref{eq:CA_cont_predictive_rule_normal} solely from the observed data
$\field{X}{r}{t}$.

\section{Discussion}
\label{sec:discussion}

Mixed LICORS attacks the problem of reconstructing the predictive state space
of a spatio-temporal process through a nonparametric, EM-like algorithm, softly
clustering observed histories by their predictive consequences.  Mixed LICORS
is a probabilistic generalization of hard LICORS and can thus be easily adapted
to other statistical settings such as classification or regression. Simulations
show that it greatly outperforms its hard-clustering predecessor.

However, like other state-of-the-art nonparametric EM-like algorithms
\citep{Hall-et-al-nonparametric-mixtures,
  Bordes-et-al-stochastic-semiparametric-EM,
  Mallapragada-et-al-non-parametrix-mixture}, theoretical properties of our
procedure are not yet well understood.  In particular, the nonparametric
estimation of mixture models poses identifiability problems \citep[\S 2 and
references therin]{Bengalia-et-al-EM-like-nonparametric}.  Here, we
demonstrated that in practice mixed LICORS does not suffer from identifiability
problems, and outperforms (identifiable) hard-clustering methods.

We also demonstrate that mixed LICORS can learn spatio-temporal dynamics from
data, which can then be used for simulating new experiments, whether from the
observed initial conditions or from new ones.  Simulating from observed
conditions allows for model checking; simulating from new ones makes
predictions about unobserved behavior.  Thus mixed LICORS can in principle make
a lot of expensive, time- and labor-intensive experimental studies much more
manageable and easier to plan.  In particular, mixed LICORS can be applied to
e.g., functional magnetic resonance imaging (fMRI) data to analyze and forecast
complex, spatio-temporal brain activity. Due to space limits we refer to future
work.

\subsubsection*{Acknowledgments}
We thank Stacey Ackerman-Alexeeff, Dave Albers, Chris Genovese, Rob Haslinger, Martin Nilsson Jacobi, Heike Jänicke, Kristina Klinkner, Cristopher Moore, Jean-Baptiste Rouquier, Chad Schafer, Rafael Stern, and Chris Wiggins for valuable discussion, and Larry Wasserman for detailed suggestions that have improved all aspects of this work. GMG was supported by an NSF grant (\# DMS 1207759). CRS was supported by grants from INET and from the NIH (\# 2 R01 NS047493). 

\clearpage

\addcontentsline{toc}{subsubsection}{References}
\bibliographystyle{abbrvnat}
\bibliography{../../../bib/PhD_thesis,../../../bib/locusts}

\cleardoublepage

\appendix

\begin{center}
\textbf{ APPENDIX -- SUPPLEMENTARY MATERIAL }
\end{center}

\section{Proofs}
\label{sec:proofs}

Here we will show that the joint pdf of the entire field conditioned on its
margin (Figure \ref{fig:field_margin}, gray area) equals the product of the
predictive conditional distributions.

\begin{proof}[Proof of Proposition \ref{prop:joint_factorizes}]
\label{proof:prop:joint_factorizes}

  To simplify notation, we assign index numbers $i \in 1, \ldots N$ to the
  space-time grid $\field{}{s}{t}$ to ensure that the PLC of $i_1$ cannot
  contain $X_{i_2}$ if $i_2 > i_1$.  We can do this by iterating through space
  and time in increasing order over time (and, for fixed $t$, any order over
  space):
  \begin{align}
    \label{eq:assign_index_increasingy}
    \begin{split}
      \field{}{s}{t}, \vecS{s} \in \vecS{S}, t \in \T & \rightarrow \left( i_{(t-1) \cdot \card{ \vecS{S} } + 1}, \ldots, i_{(t-1) \cdot \card{ \vecS{S} } + \card{ \vecS{S} } )} \right) \\
      & = (t-1) \cdot \card{S} + (1, \ldots, {\card{S}}) .
    \end{split}
  \end{align}

  \begin{figure}[!b]
    \centering
    \includegraphics[width=.49\textwidth, trim = 1cm 2.5cm 0cm 0cm, clip =
    true]{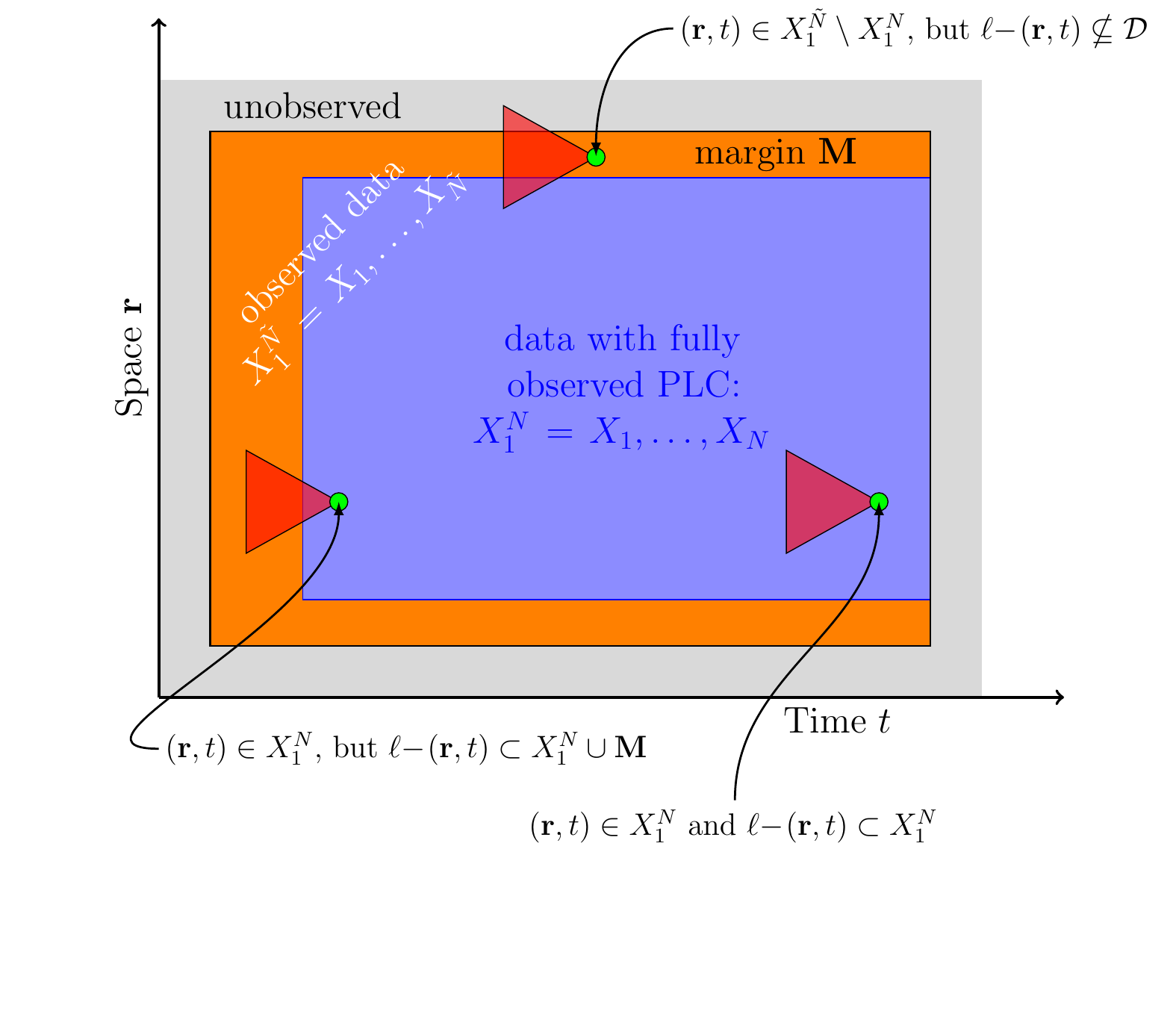}
    \caption{\label{fig:field_margin} Margin of spatio-temporal field in
      $(1+1)D$.}
  \end{figure}

  We assume that the process we observed is part of a larger field on an
  extended grid $\tilde{\vecS{S}} \times \tilde{\T}$, with $\tilde{\vecS{S}}
  \supset \vecS{S}$ and $\tilde{\T} = \lbrace - (h_p - 1), \ldots, 0, 1, \ldots
  T$, for a total of $\tilde{N} > N$ space-time points, $X_1, \ldots,
  X_{\tilde{N}}$. The margin $\vecS{M}$ are all points $\field{X}{s}{u}$,
  $\field{}{s}{u} \in \tilde{\vecS{S}} \times \tilde{\T}$ that do not have a
  fully observed past light cone.  Formally,
  \begin{equation}
    \label{eq:margin_def}
    \vecS{M} = \lbrace \field{X}{s}{u} \mid \field{\ell^{-}}{s}{u} \notin \lbrace \field{X}{r}{t}, \field{}{r}{t} \in \vecS{S} \times \T \rbrace \rbrace,
  \end{equation}
  The size of $\vecS{M}$ depends on the past horizon $h_p$ as well as the speed
  of propagation $c$, $\vecS{M} = \vecS{M}\left( h_p, c\right)$.

  In Figure \ref{fig:field_margin}, the part of the field with fully observed
  PLCs are marked in red.  Points in the margin, in gray, have PLCs extending
  into the fully unobserved region (blue).  Points in the red area have a PLC
  that lies fully in the red or partially in the gray, but never in the blue
  area.  As can be see in Fig.\ \ref{fig:field_margin} the margin at each $t$
  is a constant fraction of space, thus overall $\vecS{M}$ grows linearly with
  $T$; it does not grow with an increasing $\vecS{S}$, but stays constant.

  For simplicity, assume that $X_1, \ldots, X_N$ are from the truncated (red)
  field, such that all their PLCs are observed (they may lie in $\vecS{M}$),
  and the remaining $\tilde{N} - N$ $X_j$s lie in $\vecS{M}$ (with a PLC that
  is only partially unobserved). Furthermore let $X_1^{k} := \lbrace X_1,
  \ldots, X_k \rbrace$. Thus
  \begin{align*}
    \Prob{ \lbrace \field{X}{s}{t} \mid \field{}{s}{t} \in \tilde{\vecS{S}} \times \tilde{\T} \rbrace } & = \Prob{X_1^{\tilde{N}} } \\
    & = \Prob{X_1^{N}, \vecS{M}} \\
    & = \Prob{X_1^{N} \mid \vecS{M}} \Prob{\vecS{M}}
  \end{align*}

  The first term factorizes as
  \begin{align*}
    & \Prob{ X_1^{N} \mid \vecS{M}} \\
    & = \Prob{X_N \mid X_1^{N-1}, \vecS{M}} \Prob{X_1^{N-1} \mid \vecS{M}} \\
    &= \Prob{X_N \mid \ell^{-}_N \cup \lbrace X_1^{N-1}, \vecS{M} \rbrace \setminus \lbrace \ell^{-}_N  \rbrace} \Prob{X_1^{N-1} \mid \vecS{M}} \\
    &= \Prob{X_N \mid \ell^{-}_N} \Prob{X_1^{N-1} \mid \vecS{M}}
  \end{align*}
  where the second-to-last equality follows since by
  \eqref{eq:assign_index_increasingy}, $ \ell^{-}_N \subset \lbrace X_k \mid 1
  \leq k < N \rbrace \cup \vecS{M}$, and the last equality holds since $X_i$ is
  conditional independent of the rest given its own PLC (due to limits in
  information propagation over space-time).

  By induction,
  \begin{align}
    \Prob{X_1, \ldots, X_N \mid \vecS{M}} &= \prod_{j=0}^{N-1} \Prob{X_{N-j} \mid \ell^{-}_{N-j}} \\
    &= \prod_{i=1}^{N} \Prob{X_{i} \mid \ell^{-}_{i}}
  \end{align}

  This shows that the conditional log-likelihood maximization we use for our
  estimators is equivalent (up to a constant) to full joint maximum likelihood
  estimation.
\end{proof}

\section{Predictive States and Mixture Models}
Another way to understand predictive states is as the extremal distributions of
an optimal mixture model
\citep{Lauritzen-sufficiency-and-prediction,Lauritzen-extreme-point-models}.

To predict any variable $L^+$, we have to know its distribution $\Prob{ L^{+}
}$.  If, as often happens, that distribution is very complicated, we may try to
decompose it into a mixture of simpler ``base'' or ``extremal'' distributions,
$\Prob{L^{+} \mid \theta}$, with mixing weights $\pi(\theta)$,
\begin{equation}
  \label{eq:FLC_mixture_model_integral}
  \Prob{L^{+}} = \int{\pi(\theta) \Prob{L^{+} \mid \theta} d \theta} ~.
\end{equation}
The familiar Gaussian mixture model, for instance, makes the extremal
distributions to be Gaussians (with $\theta$ indexing both expectations and
variances), and makes the mixing weights $\pi(\theta)$ a combination of delta
functions, so that $\Prob{L^{+}}$ becomes a weighted sum of finitely-many
Gaussians.

The conditional predictive distribution of $L^{+} \mid \ell^{-}$ in
\eqref{eq:FLC_mixture_model_integral} is a weighted average over the extremal
conditional distributions $\Prob{L^{+} \mid \theta, \ell^{-}}$,
\begin{equation}
\Prob{L^{+} \mid \ell^{-}} = \int{\pi(\theta|\ell^{-}) \Prob{L^{+} \mid \theta,\ell^{-}} d \theta} ~
\end{equation}
This only makes the forecasting problem harder, unless $\Prob{ L^{+} \mid
  \theta, \ell^{-}} \pi(\theta \mid \ell^{-}) = \Prob{L^{+} \mid
  \hat{\theta}(\ell^{-})} \delta(\theta - \hat{\theta}(\ell^{-}))$, that is,
unless $\hat{\theta}(\ell^{-})$ is a predictively sufficient statistic for
$L^{+}$.  The most parsimonious mixture model is the one with the minimal
sufficient statistic, $\theta = \epsilon(\ell^{-})$.  This shows that
predictive states are the best ``parameters'' in
\eqref{eq:FLC_mixture_model_integral} for optimal forecasting.  Using them,
\begin{align}
  \label{eq:distribution_of_FLC_mixture}
  \Prob{L^{+}} &= \sum_{j=1}^{K} \Prob{\epsilon(\ell^{-}) = s_j} \Prob{L^{+} \mid \epsilon(\ell^{-}) = s_j} \\
  \label{eq:distribution_of_FLC_mixture_simplified}
  & = \sum_{j=1}^{K}{ \pi_j(\ell^{-}) \cdot \ProbState{j}{L^+} } ~,
\end{align}
where $\pi_j(\ell^{-})$ is the probability that the predictive state of
$\ell^{-}$ is $s_j$, and $\ProbState{j}{L^+} = \Prob{L^{+} \mid S = s_j}$. 
Since each light cone has a unique predictive
state,
\begin{equation}
  \label{eq:prior_probability_state}
  \pi_j(\ell^{-}) = \begin{cases}
    1, & \text { if } \epsilon(\ell^{-}) = s_j,\\ 
    0 & \text { otherwise}.
  \end{cases}
\end{equation}
Thus the predictive distribution given $\ell^{-}_i$ is just
\begin{align}
  \label{eq:distribution_of_FLC_prediction_in_mixture}
  \Prob{L^{+} \mid \ell^{-}_i} &= \sum_{j=1}^{K}{\pi_j(\ell^{-}_i) \cdot \ProbState{j}{L^{+}} }
  =  \ProbState{\epsilon(\ell^{-}_i)}{L^{+}}.
\end{align}
Now the forecasting problem simplifies to mapping $\ell^{-}_i$ to its
predictive state, $\epsilon(\ell^{-}_i) = s_j$; the appropriate
distribution-valued forecast is $p_{j}(L^+)$, and point
forecasts are derived from it as needed.\\

This mixture-model point of view highlights how prediction benefits from
grouping points by their predictive consequences, rather than by spatial
proximity (as a Gaussian mixture would do).  For us, this means clustering past
light-cone configurations according to the similarity of their predictive
distributions, not according to (say) the Euclidean geometry.  Mixed LICORS
thus learns a new geometry for the system, which is optimized for forecasting.

\end{document}